\setlist{nosep,leftmargin=\parindent}
\newif\ifdispComments
\newcommand{\mypar}[1]{\noindent\textit{#1.}}
\newcommand{\myparvs}[1]{\vspace{1mm}\noindent\textit{{#1.}}}
\newcommand{\rone}{(\emph{i})~}
\newcommand{\rtwo}{(\emph{ii})~}
\newcommand{\lbbar}{\{\kern-0.5ex|}
\newcommand{\rbbar}{|\kern-0.5ex\}}
\newcommand{\biglbbar}{\left\{\kern-0.5ex\left|}
\newcommand{\bigrbbar}{\right|\kern-0.5ex\right\}}
\newcommand{\semantics}[1]{\llbracket {#1} \rrbracket}
\newcommand{\Etrue}{{{\mathsf{true}}}}
\newcommand{\Efalse}{{{\mathsf{false}}}}
\newcommand{\Eifthenelse}[3]{{\mathsf{if}}\ {#1}\ {\mathsf{then}}\ {#2}\ {\mathsf{else}}\ {#3}}
\newcommand{\Eif}[2]{{\mathsf{if}}\ {#1}\ {\mathsf{then}}\ {#2}}
\newcommand{\Eseq}[2]{{#1}{\mathsf{;}}\ {#2}}
\newcommand{\sem}[1]{\llbracket{#1}\rrbracket}       
\newcommand{\Eassign}[2]{{{#1}\ \mathsf{:=}\ {#2}}}
\newcommand{\Ewhile}[2]{\mathsf{while} \; {#1} \; \mathsf{do} \; {#2}}
\newcommand{\Omit}[1]{}
\newenvironment{mybox}[1][gray!20]{
	\begin{tcolorbox}[   
		breakable,
		left=0pt,
		right=0pt,
		top=0pt,
		bottom=-1pt,
		colback=#1,
		colframe=#1,
		width=\dimexpr\textwidth\relax,
		boxsep=2pt,
		arc=0pt,outer arc=0pt,
		]
	}{
\end{tcolorbox}
}
\newcounter{resq}
\newcommand{\set}[1]{\{{#1}\}}
\newcommand{\lrangle}[1]{\langle {#1} \rangle}
\newcommand{\mymod}{\text{mod }}
\newcommand{\mymathcolor}[2]{\begingroup\color{#1}#2\endgroup}
\newcommand{\syconst}{\mathit{sy}_{\mathsf{id\_const}}}
\definecolor{dgreen}{RGB}{0,128,0}
\definecolor{dred}{RGB}{200,0,0}
\newcommand{\sic}[1]{\Sigma^0_{#1}}
\newcommand{\pic}[1]{\Pi^0_{#1}}
\newcommand{\gtgt}{G_{\text{IMP}}}
\newcommand{\ltgt}{\text{IMP}}
\newcommand{\Mifthenelse}[3]{{\mathit{if}}\ {#1}\ {\mathit{then}}\ {#2}\ {\mathit{else}}\ {#3}}
\newcommand{\sy}{\mathit{sy}}
\newcommand{\defeq}{\ensuremath{\triangleq}}
\newcommand{\bin}{\mathit{bin}}
\newcommand{\nullnt}{\mathit{NullNT}}
\newcommand{\nullop}{\mathsf{nop}}
\newcommand{\nullleaf}{\bullet}
\newcommand{\nullval}{\emptyset}
\newcommand{\h}{2^h - 2}
\newcommand{\stin}{\mathit{in}}
\newcommand{\stout}{\mathit{out}}
\newcommand{\fin}{\textnormal{FIN}}
\newcommand{\tot}{\textnormal{TOT}}
\newcommand{\cof}{\textnormal{COF}}
\newcommand{\syset}{\textnormal{SYNTH}}
\newcommand{\sysetfin}{\textnormal{SYNTH}_\textnormal{fin}}
\newcommand{\sysetnoloop}{\textnormal{SYNTH}_\textnormal{lf}}
\newcommand{\sysetpartial}{\textnormal{SYNTH}_\textnormal{part}}
\newcommand{\hset}{\mathsf{Halt}}
\newcommand{\godel}{Gödel\xspace}
\newcommand{\cone}{\mathit{c1}}
\newcommand{\ctwo}{\mathit{c2}}
\newcommand{\semcheckl}{\mathsf{check}}
\newcommand{\leafcheckl}{\mathsf{check\_leaf}}
\newcommand{\ptree}{\mathbf{p}}
\newcommand{\vtree}{\mathbf{v}}
\newcommand{\vztree}{\mathbf{v_z}}
\newcommand{\ftree}{\mathbf{f}}
\newcommand{\fqtree}{\mathbf{f_q}}
\newcommand{\gtree}{\mathbf{g}}
\newcommand{\semout}{\mathsf{output}}
\newcommand{\decode}{\mathsf{decode}}
\newcommand{\syc}{\sy_C}
\newcommand{\terminate}{\mathsf{Terminate}}
\newcommand{\fpair}{\mathsf{Pair}}
\def\sygus{\textsc{SyGuS}\xspace}
\begin{document}

\title[Program Synthesis is $\Sigma_3^0$-Complete]{Program Synthesis is $\Sigma_3^0$-Complete}         



\author{Jinwoo Kim}                                        
\affiliation{
  \institution{Seoul National University}            
  \city{Seoul}
  \country{Republic of Korea}                  
}
\email{jinwoo.kim@sf.snu.ac.kr}          




\begin{abstract}
This paper considers 
program synthesis in the context of computational hardness, 
asking the 
question: How hard is it to determine whether a given 
synthesis problem has a solution or not?

To answer this question, this paper studies program synthesis for 
a basic imperative, Turing-complete language IMP, 
for which this paper proves that program synthesis is 
$\Sigma_3^0$-\emph{complete} in the arithmetical hierarchy.
The proof of this fact relies on a fully constructive encoding 
of program synthesis (which is typically formulated as a second-order 
query) as a first-order formula in the 
standard model of arithmetic 
(i.e., Peano arithmetic).
Constructing such a formula then allows us to reduce the 
decision problem for COF
(the set of functions which diverge only on a finite set of inputs), 
which is well-known to be a $\Sigma_3^0$-complete problem, 
into the constructed first-order representation of synthesis.

In addition to this main result, 
we also consider the 
hardness of variants of synthesis problems, such as those
introduced in previous work to make program synthesis 
more tractable (e.g., synthesis over finite examples).
To the best of our knowledge, this paper is the first to 
give a first-order characterization of program synthesis in general, 
and precisely define the computability of synthesis problems 
and their variants.

\end{abstract}

\maketitle

\section{Introduction}
\label{Se:intro}

In recent years, a vast amount of research has been conducted on the 
subject of \emph{program synthesis}, the task of automatically 
finding a program that meets a given logical specification.
Program synthesis is now 
finding a 
wide variety of applications, 
such as in domain-specific languages (DSLs)~\cite{flashfillplus, swizzle}, 
invariant synthesis~\cite{freqplus}, 
or program repair~\cite{s3, angelix}, 
with a plethora of solvers~\cite{sketch, rosette, cvc4, semgus}  
capable of synthesizing programs for these applications.

Despite these advances, 
program synthesis remains a challenging topic in which we understand 
surprisingly little theoretically.
This is in part due to 
program synthesis being perceived as a 
computationally `hard' problem, 
albeit with good reason: 
because synthesis relies on being able to verify
a candidate program with respect to a given specification $\phi$, 
it is at the very least \emph{undecidable}.
Moreover, a synthesis problem is typically defined via 
a formula in the form of \eqref{synth-intro}, where 
$G$ denotes a grammar (that defines a search space of programs) and 
$D$ a domain of inputs:
\begin{equation}
  \label{Eq:synth-intro}
\exists f, f \in G. \forall x, x \in D. \phi(f(x), x)
\end{equation}
Here,
the existential over the 
function $f$ makes \eqref{synth-intro} a formula in 
\emph{second-order logic}---the theory of which remains relatively 
less developed and understood compared to first-order logic.

Because synthesis in general is perceived to be
so difficult, much research in the area 
has focused on practical algorithms and heuristics targeted at solving  
synthesis problems with restrictions making them more tractable 
(e.g., synthesis over finite sets of examples~\cite{sketch}, or 
over limited DSLs~\cite{flashfillplus, swizzle}),
instead of the theory of synthesis itself.

\myparvs{The Computational Hardness of Program Synthesis}
In this paper, we aim to shed some light on the theory of program 
synthesis, by studying program synthesis over a minimal, imperative, 
Turing-complete language
$\ltgt$, containing loops.
In particular, we ask the following fundamental question on the hardness of solving 
synthesis problems:


\begin{center}
	\emph{
	"How hard is it to determine whether
	a given synthesis problem has a solution or not?"}
\end{center}

This paper gives a precise mathematical answer to this question:
program synthesis over Turing-complete languages 
is $\sic{3}$-\emph{complete} in the 
arithmetical hierarchy.

Our proof of this fact relies on the fact that program synthesis can 
(somewhat surprisingly) be encoded as a \emph{first-order} formula within 
the standard model of arithmetic.
We prove this fact by giving a \emph{fully constructive} encoding of an 
arbitrary synthesis problem as a first-order formula.
While there exist other methods for showing that program synthesis has 
a first-order representation, the construction in this paper has an advantage 
in that it 
explicitly preserves the components of a synthesis problem, such as the 
grammar check $f \in G$, or the specification $\phi$.
This makes it much easier to treat the first-order representation itself as a 
synthesis problem as opposed to a complex blackbox formula, 
which in turn makes studying the properties of the first-order representation easier.

Based on the first-order representation of program synthesis, 
we then prove that program synthesis over $\ltgt$ is $\sic{3}$-complete,
by reducing a well-known $\sic{3}$-complete problem into the 
constructed first-order representation
(something that would not have been possible using only the second-order definition).
The $\sic{3}$-complete problem in question is \cof~\cite{soare}, the set of 
all functions which diverge only on a finite set of inputs---we show that 
membership in \cof \xspace can be reduced to determining whether a synthesis problem 
has a solution or not.

The fact that our first-order encoding takes care to preserve components of the 
original synthesis problem 
also allows us to easily consider \emph{variants} 
of these components, 
such as restrictions from previous work
to make synthesis more tractable.
For example, such a common restriction is \emph{programming-by-example}, where 
the input domain consists of a finite number 
of examples~\cite{sketch,flashfill};
our construction immediately yields the fact that programming-by-example is 
$\sic{1}$-complete, whereas, 
to the best of our knowledge, there has been 
no exact analysis of how much limiting the input domain makes a 
synthesis problem easier to solve.
Like this, 
the results in this paper help precisely quantify how much of an effect 
certain restrictions have on a synthesis problem, 
shedding light onto the general observations from previous work that these 
restricted synthesis problems are practically easier to solve.




At this point, one may ask: what practical merit is there in studying 
the hardness of program synthesis, seeing that almost everything is undecidable 
anyways?
While it is true that the results in this paper are mostly theoretical in this sense, 
we argue that these theoretical results actually suggest future research directions 
that may lead to more practical synthesis algorithms in the future.
For example, when designing synthesis algorithms for a specific application 
(e.g., data movement expressions for GPUs~\cite{swizzle}), 
one may now derive exactly how hard the problem at hand is, and also 
consider realistic restrictions for the application that would make 
the problem easier 
(e.g., the movement expression only needs to be correct on a finite range of data).
Such insight is often valuable in the design of specialized algorithms.

Another potentially interesting notion that our paper unearths is that of 
\emph{generalization}, the task of extending a function 
that is correct on a finite subset of inputs (e.g. one that is obtained 
via programming-by-example)
to be correct on a general, infinite set of inputs.
Many solvers rely on some form of generalization~\cite{sketch, semgus}, but 
the task of generalization itself is one that has received nearly no
previous research---perhaps because there exists a well-known generalization 
procedure, \emph{counterexample-guided inductive synthesis}~\cite{sketch} 
(CEGIS), that works very well in practice.
However, in \S\ref{Se:variants}, we show that CEGIS is actually suboptimal 
as a generalization method, at least
in terms of computational hardness: 
hinting at possibly more efficient methods for generalization.
In tandem with identifying the generalization problem itself, we argue 
that the results in this paper, while themselves theoretical, 
provoke such interesting questions for future work.

\myparvs{Contributions}
To summarize, we make the following contributions:
\begin{itemize}
  \item A fully constructive reduction of the standard 
    second-order formula that defines program 
    synthesis into an equivalent first-order formula, showing 
    that program synthesis is \emph{first-order} (\S\ref{Se:fo}).
  \item A classification of the first-order synthesis query developed in \S\ref{Se:fo} 
    with respect to the arithmetical hierarchy, showing that program synthesis is 
    $\sic{3}$-\emph{complete} (\S\ref{Se:completeness}).
  \item Based on the results from \S\ref{Se:fo} and \S\ref{Se:completeness}, 
    an analysis on the computability of \emph{variants} 
    of synthesis problems, such as those introduced in previous 
    work to make synthesis more 
    tractable such as programming-by-example (\S\ref{Se:variants}).
\end{itemize}

\S\ref{Se:prelim} discusses preliminary concepts in computability, 
and defines the target language we will use in this paper.
\S\ref{Se:discussion} discusses the results and contributions of this paper.
\S\ref{Se:conclusion} concludes.

\section{Preliminary Concepts and the Language Definition}
\label{Se:prelim}

In this section, we introduce necessary preliminary definitions and theorems from 
computability theory, and define the target language we will consider for 
defining program synthesis problems.

\subsection{Preliminary Concepts from Computability}
As stated in \S\ref{Se:intro}, the goal of this paper is to answer 
how hard it is to solve a program synthesis problem in a precise mathematical manner.
We rely on concepts from computability 
theory to answer this question, 
starting with the definition of a \emph{problem}.

\begin{definition}[Problem]
  \label{def:problem}
  A \emph{problem} $P$ is defined as a subset of the natural numbers $\mathbb{N}$.
  We define the \emph{decision problem} for $P$ as to determine whether 
  a given arbitrary number $x$ is a member of $P$.

  If there exists an algorithm capable of solving the decision problem for $P$ within 
  a finite amount of steps, we say that $P$ is \emph{decidable}, \emph{computable}, or 
  \emph{recursive}.
  If not, we say that $P$ is \emph{undecidable}.

  If there exists an algorithm capable of solving the decision problem for $P$
  when $x \in P$ in a finite amount of steps, 
  but may not terminate otherwise, we say that $P$ is 
  \emph{recursively enumerable}, or \emph{semidecidable}.
  Similarly, if there exists an algorithm capable of solving cases where $x \not \in P$, 
  but may not terminate otherwise, we say that $P$ is 
  \emph{co-recursively enumerable}.
\end{definition}

Definition~\ref{def:problem} defines a problem as the set of 
its solutions, which are encoded as natural numbers.
For example, the canonical statement of the Halting problem
is the set $\set{(M, i) \mid M \text{ halts on input } i}$, where the Turing machine-input 
pair $(M, i)$ can be further encoded as a single number.
The fact that problems can be defined as sets of natural numbers is important, 
as it allows us 
to express problems using a seemingly completely different formalism: 
formulas in the standard model of arithmetic.

By the standard model of arithmetic, we refer to the first-order theory of 
Peano arithmetic, in which one can write 
formulae that contain addition, multiplication, and first-order quantifiers 
ranging over natural numbers 
(which can be further used to encode other operations such as division or remainder).
For this paper, we will assume that the standard model also contains 
a symbol for each primitive recursive function; i.e., that the standard model of 
arithmetic has been extended with the axioms for primitive recursion.
This makes studying formulae in the standard model more natural from the perspective 
of computability, as we will shortly see.
For the remainder of this paper, we will use the word `formula' to refer to such 
first-order formulae, unless otherwise explicitly noted.

The set of numbers that a formula $\phi$ defines are exactly the set of numbers 
(i.e., assignments) that make 
$\phi$ true when substituted for the free variables: for example, the formula 
$\phi(x) = \exists y. x + y = 2$ defines the set of all $x$ for which there exists a $y$ 
such that $x + y = 2$ (namely, the set $\set{0, 1, 2}$).
It is now clear that sets, problems, and formulae are 
simply different ways of expressing the same object, and we will use these 
terms interchangeably in this paper from this point on.

One key idea that we will rely on for this paper is that it is possible 
to encode 
\emph{any arbitrary sequence of finite length} as a \emph{pair of numbers} 
in the standard model of arithmetic, 
a construction that will be essential in our 
reduction of program synthesis to a first-order formula in \S\ref{Se:fo}.
The construction relies on the \godel $\beta$-function, 
which defines a way to decode a pair of integers into 
a finite sequence of integers.
\begin{lemma}[The \godel $\beta$-function~\cite{godel}]
  \label{lem:beta}
  Let $\beta(a, b, i)$ define the function $\beta(a, b, i) \equiv a (\mymod 1 + b \cdot (1 + i))$.
  Then for any finite length $l$ and a sequence of integers $\lrangle{c_0, \cdots, c_l}$ of length $l$, 
  there exists integers $n_a, n_b$ such that the following holds:
  \[
     \forall j, 0 \leq j \leq l. \beta(n_a, n_b, i) = c_i
  \]
\end{lemma}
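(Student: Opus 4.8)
The plan is to reduce the claim to the Chinese Remainder Theorem (CRT). Unfolding the definition, the desired equality $\beta(n_a, n_b, i) = c_i$ is exactly the statement $n_a \equiv c_i \pmod{1 + n_b(1 + i)}$, so what I really need is a single number $n_a$ whose residues modulo the $l+1$ moduli $m_i \defeq 1 + n_b(1+i)$ recover the prescribed sequence $c_0, \dots, c_l$. CRT supplies such an $n_a$ provided two conditions hold: (a) the moduli $m_0, \dots, m_l$ are pairwise coprime, and (b) each target value satisfies $c_i < m_i$, so that $c_i$ is genuinely the remainder rather than merely congruent to it. The entire argument thus comes down to choosing the single free parameter $n_b$ so that both conditions hold at once.

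The key idea is to take $n_b$ to be a sufficiently large factorial. Concretely, I would set $s \defeq \max(l, c_0, \dots, c_l)$ and choose $n_b = s!$. Condition (b) is then immediate, since $m_i = 1 + s!\,(1+i) \ge 1 + s! > s \ge c_i$ for every $i$ (using $s! \ge s$). So the only real work lies in condition (a).

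Pairwise coprimality is the main obstacle and the crux of the construction. I would argue by contradiction: suppose some prime $p$ divides both $m_i$ and $m_j$ with $i < j \le l$. Then $p$ divides their difference $m_j - m_i = s!\,(j - i)$. First, $p$ cannot divide $s!$, for if it did, then from $p \mid m_i = 1 + s!(1+i)$ we would get $p \mid \big(m_i - s!(1+i)\big) = 1$, which is absurd. Hence $p \mid (j - i)$, and since $p$ is prime this forces $p \le j - i \le l \le s$. But every prime at most $s$ is a factor of $s!$, so $p \mid s!$ after all, contradicting what we just established. Therefore no such prime exists and the moduli are pairwise coprime.

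With both CRT hypotheses verified, the conclusion is routine: CRT yields an $n_a$ with $n_a \equiv c_i \pmod{m_i}$ for all $i$, and because $c_i < m_i$ this congruence pins down the remainder exactly, giving $\beta(n_a, n_b, i) = n_a \bmod \big(1 + n_b(1+i)\big) = c_i$. Taking the witnesses to be this $n_a$ together with $n_b = s!$ completes the proof. I expect essentially all of the difficulty to concentrate in the coprimality step; the factorial choice and the final CRT invocation are then bookkeeping.
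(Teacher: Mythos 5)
Your proof is correct and takes the same route the paper itself gestures at: the paper omits the argument entirely, noting only that the witnesses are obtained via the Chinese remainder theorem and deferring to \godel's original construction for details. Your factorial choice of $n_b$, the coprimality argument, and the final CRT invocation are precisely the standard way to fill in that omitted step.
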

The proof of Lemma~\ref{lem:beta} constructs the values $m$ and $n$ through the 
Chinese remainder theorem; the actual construction is not important for this paper, 
and we refer the reader to~\citet{godel} for details.
The important part of Lemma~\ref{lem:beta} is that sequences of \emph{unbounded but finite} 
length may be encoded as a pair of integers.
In later sections, we will rely 
on this fact to encode information such as the syntactic structure of programs, 
which may be encoded as a sequence of productions, 
as first-order formulae.

\myparvs{The Arithmetical Hierarchy}
Intuitively, the more quantifiers a formula has, the harder it will be to check 
whether an assignment satisfies the formula or not: an increase in the 
hardness of computability of that set.
This allows us to study the difficulty of solving a problem by studying 
its representation as a formula, an 
intuition which is formalized via the arithmetical hierarchy.


\begin{definition}[The Arithmetical Hierarchy]
  \label{def:ah}
  Let $\phi$ be a first-order formula in the standard model of arithmetic, in prenex normal form.
  The \emph{arithmetical hierarchy} consists of two sequences of classes, 
  $\sic{n}$ and $\pic{n}$, where $n$ is a natural number.
  $\phi$ is assigned a class in the arithmetical hierarchy as following:
  \begin{itemize}
    \item If $\phi$ contains only bounded quantifiers, then $\phi$ is both $\sic{0}$ and $\pic{0}$; these two classes are equivalent.
    \item If $\phi$ is of the form $\exists x. \psi$ for a variable $x$ and a formula $\psi$
      with classification $\pic{n}$ (that is, $\psi$ contains $n$ alternating unbounded quantifiers led by a $\forall$), 
      then $\phi$ is in the class $\sic{n + 1}$. 
    \item If $\phi$ is of the form $\forall x. \psi$ for a variable $x$ and a formula $\psi$
      with classification $\sic{n}$ (that is, $\psi$ contains $n$ alternating unbounded quantifiers led by an $\exists$), 
      then $\phi$ is in the class $\pic{n + 1}$. 
  \end{itemize}
\end{definition}

Because it is always possible to add redundant quantifiers to a formula without altering its meaning
(e.g., $\forall z. \exists y. x + y = 2$ denotes the same set of numbers as $\exists y. x + y = 2$), 
a formula $\phi$ in $\sic{n}$ or $\pic{n}$ is guaranteed to be in $\sic{k}$ and $\pic{k}$ for all 
$k > n$ as well.
Thus when studying the computability of a particular problem $P$, one will be interested in the 
finding \emph{lowest class} that a problem can be assigned---in other words, 
constructing a formula with the 
\emph{minimum} amount of quantifiers that expresses the same set as $P$.

To bring the discussion back to computability, 
Theorem~\ref{thm:ah} connects classes from the arithmetical hierarchy to the concepts 
of computability defined in Definition~\ref{def:problem}.
\begin{theorem}
  \label{thm:ah}
  Let $P$ be a problem. Then the following holds:
  \begin{itemize}
    \item P is decidable if and only if $P \in \sic{1} \cap \pic{1}$.
    \item P is r.e. if and only if $P \in \sic{1}$.
    \item P is co-r.e. if and only if $P \in \pic{1}$.
  \end{itemize}
  Furthermore, let an \emph{oracle} for a class $\sic{n}$ or $\pic{n}$ be a mechanism that can 
  instantly solve the decision problem for all problems in that class.
  Then the following holds:
  \begin{itemize}
    \item If $P$ is in $\sic{n+1}$, then $P$ is r.e. given access to an oracle for $\pic{n}$.
      That is, there exists an algorithm capable of answering $\Etrue$ for cases in which 
      $x \in P$ within a finite number of steps, if the algorithm can access the oracle for $\pic{n}$.
    \item If $P$ is in $\pic{n+1}$, then $P$ is co-r.e. given access to an oracle for 
      $\sic{n}$.
  \end{itemize}
\end{theorem}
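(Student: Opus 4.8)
The plan is to prove the unrelativized (base) statement first and then obtain the oracle statements as a routine relativization of one half of that argument. The base statement reduces to a single core fact: that a set is recursively enumerable if and only if it lies in $\sic{1}$. The characterizations of co-r.e.\ and decidable sets then follow by taking complements and by dovetailing, and the oracle statements will follow by relativizing the ``formula-to-computability'' direction of the core equivalence.

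For the core equivalence I would argue both directions. For the direction from computability to logic, suppose $P$ is r.e., witnessed by an algorithm (equivalently a Turing machine) $M$ that halts exactly on the members of $P$. The key step is to \emph{arithmetize} the computation of $M$: a halting run of $M$ on input $x$ is a finite sequence of machine configurations, which by the \godel $\beta$-function of Lemma~\ref{lem:beta} can be encoded as a pair of integers. The predicate asserting that a given pair codes a valid halting computation of $M$ on $x$ inspects only finitely many, boundedly indexed entries of the sequence and compares them against $M$'s (primitive recursive) transition relation; hence it is expressible using a primitive recursive matrix together with bounded quantifiers, i.e.\ as a $\sic{0}$ formula. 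Membership in $P$ is then ``there exists a code of a halting computation,'' a single unbounded existential over a $\sic{0}$ matrix, so $P \in \sic{1}$ by Definition~\ref{def:ah}. Conversely, if $P \in \sic{1}$, write $P = \set{x \mid \exists y.\, \psi(x,y)}$ with $\psi$ a $\sic{0}$ formula. Since $\psi$ is built from primitive recursive functions and bounded quantifiers it is decidable, so the procedure that enumerates candidate witnesses $y = 0, 1, 2, \dots$ and evaluates $\psi(x,y)$ halts and answers $\Etrue$ precisely when $x \in P$; thus $P$ is r.e.

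The remaining base cases and the oracle cases follow quickly from this. A set is co-r.e.\ exactly when its complement is r.e., and negating a $\sic{1}$ formula yields a $\pic{1}$ formula (the leading $\exists$ becomes a $\forall$ over the negated, still-decidable matrix), giving $P$ co-r.e.\ iff $P \in \pic{1}$. For decidability, if $P \in \sic{1} \cap \pic{1}$ then both $P$ and its complement are r.e.; running the two semidecision procedures in parallel and returning the first to halt yields a total decision procedure, and the converse is immediate. For the oracle statements, I would observe that each is simply the relativized ``formula-to-computability'' half: if $P \in \sic{n+1}$ then $P = \set{x \mid \exists y.\, \psi(x,y)}$ where $\psi \in \pic{n}$, and viewing $\psi$ as defining a $\pic{n}$ set of pairs $\langle x, y \rangle$, an oracle for $\pic{n}$ decides $\psi(x,y)$ in one step; the witness-search procedure then halts and answers $\Etrue$ exactly when $x \in P$, so $P$ is r.e.\ relative to the $\pic{n}$ oracle. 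The $\pic{n+1}$ case is entirely symmetric, using a $\sic{n}$ oracle to decide the $\sic{n}$ matrix of a $\forall y.\, \psi(x,y)$ representation.

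The main obstacle is the arithmetization in the first direction: making precise that ``$M$ halts on $x$'' is captured by a $\sic{0}$ matrix guarded by a single existential. This is exactly where Lemma~\ref{lem:beta} is indispensable, as it lets an unbounded computation history be named by a single number and inspected by bounded, primitive recursive checks; once this is in place, the dovetailing and complementation steps and the oracle relativizations are all routine.
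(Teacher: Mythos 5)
Your argument is correct and is the classical one: arithmetizing halting computations via the \godot-free use of Lemma~\ref{lem:beta} to get the r.e.\ $\Rightarrow \sic{1}$ direction, witness search for the converse, complementation and parallel dovetailing for the co-r.e.\ and decidability clauses, and relativized witness search for the oracle clauses --- this is precisely the standard proof of the Kleene/Post characterization. The paper states Theorem~\ref{thm:ah} as known background from computability theory and gives no proof of its own, so there is nothing to diverge from; your proposal correctly supplies the standard argument.
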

Perhaps most interesting are the latter two bullets, which formalize 
how much harder problems higher up in the hierarchy is compared to one in a lower class: 
they become semidecidable when given an oracle for problems directly one class lower.

The way we defined the class $\sic{0} = \pic{0}$, along with Theorem~\ref{thm:ah}, explain 
why we allow symbols for primitive recursive functions in our formulae: 
recursion for these formulae are guaranteed to terminate in a finite number 
of steps (similar to bounded quantifiers).
However, expressing primitive recursive functions in a language without
symbols for primitive recursion functions 
requires the use of an existential quantifier, 
which changes their classification in the arithmetical hierarchy.
Thus, when studying these formulae for the purpose of computability, it is beneficial 
to allow primitive recursion in $\sic{0} = \pic{0}$.
We note that starting from $\sic{1}$ and $\pic{1}$, classes higher up in the hierarchy 
remain unchanged regardless of whether we allow primitive recursive operators 
in our formulae or not.


We conclude this section with a brief recap on the classifications of well-known problems in 
computer science, which will be useful later when stating the hardness of program synthesis 
relative to these problems.
The Halting problem is in $\sic{1}$; furthermore, it is $\sic{1}$-\emph{complete}, meaning 
that any other problem in $\sic{1}$ can be reduced to a version of the Halting problem.
The universal Halting problem, which asks for the set of Turing machines 
$\set{M \mid M \text{ halts on every input}}$, is $\pic{2}$-complete.

In program verification and synthesis, most of the specifications that a program 
is desired to meet are in $\sic{0} = \pic{0}$ 
(the primitive recursive specifications).
Based on this notion, standard safety verification, where the goal is to 
show that a program terminates while satisfying some safety property on 
all inputs, is also $\pic{2}$-complete.
The result of this paper prove that program synthesis is $\sic{3}$-complete, which 
means that, if one is given an oracle for safety verification,
program synthesis becomes semidecidable.

\subsection{Defining the Target Language IMP}

Having established the necessary concepts related to computability, we now 
define the target language $\ltgt$ we will use for defining synthesis 
problems.
Figure~\ref{fig:gtgt} defines the grammar $\gtgt$ that generates $\ltgt$, which is 
a minimal but still Turing-complete 
imperative language that contains variables, Boolean and integer expressions,
assignments, sequential composition, branches and loops.
Note that the maximum arity of operators in $\gtgt$ is $2$ 
(e.g., $\gtgt$ contains only $\mathsf{if\ then}$ as opposed to $\mathsf{if\ then\ else}$): this 
will become useful in \S\ref{Se:fo} when constructing a first-order formula 
for program synthesis.

\begin{figure}
   \[
  \begin{array}{llcl}
    \mathit{Boolean}    & B & ::= &  \Etrue \mid \Efalse \mid ! B \mid B \wedge B \mid E < E \mid E = E \\
    \mathit{Variable}   & V & ::= &  x \mid y \mid \cdots \\
    \mathit{Expression} & E & ::= &  0 \mid 1 \mid V \mid E + E \mid E - E \mid E \cdot E \mid E / E \\
    \mathit{Statement}  & S & ::= &  \Eassign{V}{E} \mid \Eseq{S}{S} \mid \Eif{B}{S} \mid \Ewhile{B}{S} \\
  \end{array}
  \]
  \caption{The target grammar $\gtgt$ that generates the target language $\ltgt$ we are interested in for this paper. 
  }
  \label{fig:gtgt}
\end{figure}

We define a semantics for terms inside $\ltgt$.
In this paper, we define a state $\sigma$ as a map $\mathsf{Variable} \rightarrow \mathsf{Value}$,
where a $\mathsf{Value}$ is an integer.
Such a state $\sigma$, and also updates to the state,
may be further encoded within the standard model using tuples of values.
The semantics of an arbitrary term
$t \in \ltgt$, denoted as $\sem{t}$, is understood to be a (partial) function 
which takes as input a state and 
either produces a new state (for statements), or
integer or Boolean values (for expressions, variables and Booleans).

There are many ways to define semantics for terms, such as big-step semantics; 
\eqref{while-sem-recursive} gives an example of defining semantics for loops 
in this fashion.
{\small
\begin{equation}
  \label{Eq:while-sem-recursive}
  \sem{\Ewhile{b}{s}}(\sigma) = \Mifthenelse{\sem{b}}{\sem{\Ewhile{b}{s}}(\sem{s}(\sigma))}{\sigma}
\end{equation}
}
In this paper, we focus on the fact that the semantics of a term $t$
can also be represented as a formula, as illustrated in Lemma~\ref{lem:while-sem}.
\begin{lemma}[\citet{winskel}]
  \label{lem:while-sem}
  Consider a loop $t = \Ewhile{b}{s}$ for $b \in L(B)$ and $s \in L(S)$.
  Then the result of executing $t$ on an input state $\sigma$ is equivalent to the set of $\pi$ that 
  satisfy the following formula:
  {\small
  \begin{equation}
    \label{Eq:while-sem}
  \begin{split}
    \sem{\Ewhile{b}{s}}(\sigma) = & \exists k. \exists \sigma_0, \cdots, \sigma_{k}. (\sigma = \sigma_0) \wedge \\
      & (\forall i, 0 \leq i < k. \sem{b}(\sigma_{i}) = \Etrue \wedge \sem{s}(\sigma_{i}) = \sigma_{i + 1}) \wedge \\
      & (\pi = \sigma_k \wedge \sem{b}(\sigma_{k}) = \Efalse)
  \end{split}
  \end{equation}
  }
  In other words, \eqref{while-sem} captures the semantics of $t$ as a formula.
\end{lemma}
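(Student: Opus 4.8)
The plan is to establish the claimed set equality by proving the two inclusions separately, in each case inducting on the number of times the loop body is executed. Throughout, I treat the recursive equation \eqref{while-sem-recursive} as the \emph{defining} (least-fixed-point) characterization of $\sem{\Ewhile{b}{s}}$: concretely, $\sem{\Ewhile{b}{s}}$ is the least fixed point of the functional $F(w) = \lambda\sigma.\,\Mifthenelse{\sem{b}(\sigma)}{w(\sem{s}(\sigma))}{\sigma}$, so that $\sem{\Ewhile{b}{s}}(\sigma) = \pi$ holds exactly when $F^{n}(\bot)(\sigma) = \pi$ for some finite $n$. Since statement semantics is a partial function, the left-hand set $\set{\pi \mid \sem{\Ewhile{b}{s}}(\sigma) = \pi}$ is either a singleton or empty, and I must show it coincides with the set of $\pi$ satisfying \eqref{while-sem}.

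For the forward inclusion I would assume $\sem{\Ewhile{b}{s}}(\sigma) = \pi$. By the least-fixed-point characterization the loop halts after executing its body some finite number $k$ of times, and I induct on $k$. In the base case $k = 0$ the guard is already false, so $\sem{b}(\sigma) = \Efalse$ and \eqref{while-sem-recursive} forces $\pi = \sigma$; the length-one sequence $\sigma_0 = \sigma$ then witnesses \eqref{while-sem}. In the inductive step $\sem{b}(\sigma) = \Etrue$, so \eqref{while-sem-recursive} gives $\sem{\Ewhile{b}{s}}(\sigma) = \sem{\Ewhile{b}{s}}(\sem{s}(\sigma))$; the loop started from $\sem{s}(\sigma)$ halts in $k - 1$ body executions, the induction hypothesis yields a witnessing sequence beginning at $\sem{s}(\sigma)$, and prepending $\sigma$ produces a sequence of length $k$ satisfying \eqref{while-sem}.

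For the backward inclusion I would assume given a value $k$ and states $\sigma_0, \dots, \sigma_k$ satisfying the quantifier-free part of \eqref{while-sem}, and again induct on $k$. When $k = 0$ we have $\sem{b}(\sigma_0) = \Efalse$ and $\sigma = \sigma_0 = \pi$, so \eqref{while-sem-recursive} immediately gives $\sem{\Ewhile{b}{s}}(\sigma) = \sigma = \pi$. When $k > 0$ we have $\sem{b}(\sigma_0) = \Etrue$ and $\sem{s}(\sigma_0) = \sigma_1$, and the suffix $\sigma_1, \dots, \sigma_k$ witnesses \eqref{while-sem} for the starting state $\sigma_1$; the induction hypothesis gives $\sem{\Ewhile{b}{s}}(\sigma_1) = \pi$, and a single application of \eqref{while-sem-recursive} transports this back to obtain $\sem{\Ewhile{b}{s}}(\sigma_0) = \pi$.

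I expect the main obstacle to be not the inductive bookkeeping but the justification that \eqref{while-sem-recursive} genuinely denotes the least fixed point, so that ``$\sem{\Ewhile{b}{s}}(\sigma) = \pi$'' is equivalent to ``the loop halts within finitely many body executions, producing $\pi$''. Pinning this down is what guarantees that divergence on $\sigma$ corresponds exactly to the nonexistence of any witnessing $k$, and hence to the emptiness of both sides; once the equivalence is in hand, the existential $\exists k$ in \eqref{while-sem} captures termination precisely. Finally, to support the closing claim that \eqref{while-sem} is a bona fide first-order formula, I would appeal to Lemma~\ref{lem:beta}: the block $\exists \sigma_0, \dots, \sigma_k$, which quantifies over a finite sequence of states (each itself a finite tuple of integer values), is replaced by quantification over a single pair of integers, with every indexed access $\sigma_i$ rewritten as an application of the $\beta$-function, leaving only first-order quantifiers ranging over $\mathbb{N}$.
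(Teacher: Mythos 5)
Your proof is correct and takes essentially the same approach as the paper, which simply defers to \citet{winskel} with the remark that \eqref{while-sem} is a fixed-point solution of \eqref{while-sem-recursive}; your two-directional induction on the number of loop-body executions is precisely the standard argument behind that citation. Your insistence that the recursive equation be read as the \emph{least} fixed point is a worthwhile sharpening of the paper's one-line remark, since it is what makes divergence on $\sigma$ correspond exactly to the emptiness of both sides.
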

The two discussed ways for defining the semantics of loops 
(\eqref{while-sem-recursive} and \eqref{while-sem}) are equivalent: it is possible to prove that 
\eqref{while-sem} is a fixed-point solution of \eqref{while-sem-recursive} (\citet{winskel}).
It is thus once again beneficial, from the viewpoint of studying computability, to 
express semantics as formulae; this will be useful in our encoding of program 
synthesis as a first-order formula in \S\ref{Se:fo}.

We also observe that strictly speaking, 
\eqref{while-sem} itself is not a first-order formula: it contains a quantified sequence of variables 
$\exists \sigma_0, \cdots, \sigma_{k}$ of variable length $k + 1$.
However, because $k$ is finite, Lemma~\ref{lem:beta} guarantees the existence of a pair of numbers 
$a_{\sigma}$ and $b_{\sigma}$ that encode $\sigma_0, \cdots, \sigma_k$ through the \godel $\beta$-function; 
this allows one to replace the sequence $\sigma_0, \cdots, \sigma_k$ with $a_{\sigma}$ and $b_{\sigma}$ 
to obtain a true first-order formula.

We wrap the discussion on $\ltgt$ up by 
considering the  structure of \eqref{while-sem}, which can be understood intuitively 
as a ``guess-and-check'' system.
The idea is that 
the existential quantifier over $\sigma_0, \cdots, \sigma_k$ ``guesses'' the correct sequence of states that 
would be obtained from iterating through the while loop, while the ensuing formula ``checks'' that 
the guessed sequence actually respects the semantics of the loop.
We will rely extensively on similar structures, where we guess the correct sequence of states 
then validate them, in \S\ref{Se:fo}.

\section{Constructing a First-Order Representation of Program Synthesis}
\label{Se:fo}

Having established the necessary preliminaries, this section sets the stage 
for studying synthesis from the perspective of computability by 
constructing a first-order formula that is equivalent to program synthesis.
There do exist alternative ways to show that program synthesis is first-order, 
but as discussed in \S\ref{Se:intro}, the construction in this paper is unique 
in that it provides an explicit construction of each component in the definition 
of a synthesis problem as a first-order formula.
We will see later in \S\ref{Se:variants} and \S\ref{Se:discussion} that such 
an explicit, structure-preserving construction is highly beneficial in 
studying the complexity of synthesis problems.

Program synthesis problems are often defined using two components: 
\rone a \emph{grammar} $G$, that defines a possibly infinite set of terms, and 
\rtwo a \emph{specification} $\phi$ , which is a first-order Boolean formula
specifying the behavior of the function to be synthesized.
For the purposes of this paper, we will assume that the grammar $G$ is given as a 
regular tree grammar (RTG)
such that $L(G) \subseteq L(\gtgt)$; i.e., 
$G$ is only allowed operators within $\gtgt$ as part of its productions.
When we refer to an arbitrary grammar in the rest of this paper, we refer to such 
a subgrammar of $\gtgt$.
%
%

Similarly to the restriction on grammars, we will also assume for the time being 
that the specification $\phi$ is in the class $\sic{0} = \pic{0}$ as discussed in \S\ref{Se:prelim};
when we refer to an arbitrary specification in this section, we refer to such a 
primitive recursive formula.
Later in \S\ref{Se:variants}, we will consider specifications that are outside of this 
assumption and the effect they have on the hardness of synthesis problems.

\begin{definition}[Synthesis Problem]
  \label{def:synth-def}
  Let $G$ be a RTG, $D$ be a domain of input states, and
  $\phi$ a specification.
  We define a \emph{synthesis problem} $\sy$ over $G, D, $ and $\phi$ 
  as the following second-order formula:\footnote{
    We include the term $f$ as an argument to $\phi$ here (unlike \eqref{synth-intro})
    as some synthesis problems 
    place restrictions on the structure of $f$ itself, such as the size of $f$.
  }
  \begin{equation}
  \label{Eq:synth-def}
  \sy \defeq \exists f, f \in G. \forall \sigma, \sigma \in D. 
    \phi(\sigma, f, \sem{f}(\sigma))
  \end{equation}

  If \eqref{synth-def} is true, meaning that there does exist an $f \in G$ that meets the 
  specifications of $\sy$, we say that $\sy$ is \emph{realizable}.
  Otherwise, we say that $\sy$ is \emph{unrealizable}.
\end{definition}

\eqref{synth-def} is traditionally understood as a second-order formula because 
it contains an existential quantifier over a term $f$---essentially a relation that 
relates an input state and an output state, where the two 
components concerning $f$ in \eqref{synth-def} are $f \in G$ and $\sem{f}(\sigma)$.
Our goal in this section is to replace these two components with equivalent 
first-order representations.

\myparvs{Key Idea}
To construct first-order formulae encoding to these components,
consider $f$ in \eqref{synth-def} as a purely 
syntactic first-order object instead of a function 
(i.e., a natural number that encodes the syntactic structure of $f$).
The first component concerning $f$ in \eqref{synth-def}, $f \in G$, 
is a syntactic check over the structure of $f$, which needs not consider the fact 
that $f$ is a second-order relation.
In the latter component, $\sem{f}(\sigma)$,
it is also possible to treat $f$ simply as an argument passed to the semantics 
function $\sem{\cdot}$, if one takes the perspective that $\sem{\cdot}$ 
itself is a function of type 
$\mathsf{Term} \rightarrow \mathsf{State} \rightarrow \mathsf{State}$.

The core challenge then becomes constructing the semantics function $\sem{\cdot}$ 
as a first-order formula.
Observe that unlike Lemma~\ref{lem:while-sem}, for $\sem{\cdot}$ we must 
construct a formula that is capable of accepting and analyzing the 
semantics of \emph{every} term $t \in L(G)$, while rejecting terms that are not 
in $G$, for an arbitrary grammar $G$.
The main contribution of this section are the key constructions required for 
constructing the formula for $\sem{\cdot}$, 
which provide enough insight on the 
first-order representation of synthesis to derive our results on computational hardness 
in \S\ref{Se:completeness} and $\S\ref{Se:variants}$.

\subsection{Dealing with Syntax: Complete Binary Trees}
\label{SubSe:syntax}

To construct the first-order representation for 
$\sem{\cdot}$, one must first be able to parse terms correctly using 
first-order formulae in order to encode their semantics and check 
whether they are syntactically valid.
Because $f$ is a term which is syntactically represented as a tree 
in \eqref{synth-def}, 
but we wish to treat $f$ simply as a natural 
number for the first-order construction, 
our main goal becomes encoding arbitrary trees as natural numbers.
There are many ways to perform this encoding; in this section, we will 
develop our own variant that is particularly well-suited to the task 
of constructing a corresponding formula for $\sem{\cdot}$.

To encode trees as natural numbers, we will rely on the fact that 
\rone a tree $f$ with $l$ nodes 
can be represented as a sequence $p_0, \cdots, p_l$
(e.g., by taking a preorder traversal of $f$) , and 
\rtwo such a finite sequence may be encoded via a 
pair of integers $a_p, b_p$ following Lemma~\ref{lem:beta}.
The main challenge lies in the fact that 
preorder traversals of trees are not guaranteed to be unique---for 
example, Diagram~\ref{Eq:diag_1} illustrates two distinct terms which 
result in the same preorder traversal, but only the left term
is syntactically valid.
\begin{minipage}{0.9\linewidth}
\begin{center}
  $\quad$
  \begin{tikzpicture}[scale=.9]
    \Tree [.== [.+ 1 2 ] 3 ]
  \end{tikzpicture}
  \hspace{3mm}
  \begin{tikzpicture}[scale=.9]
    \Tree [.== + [.1 2 3 ] ]
  \end{tikzpicture}
\end{center}
\end{minipage}
\begin{minipage}{0.05\linewidth}
\begin{equation}
  \label{Eq:diag_1}
\end{equation}
\end{minipage}

\myparvs{Fixing Structure via Complete Binary Trees}
To fix this problem, in this paper, 
we will fix the structure of the trees that we are considering to 
\emph{complete binary trees}, which
have a fixed structure in which 
a parent node with index $i$ has exactly two children located at $2i + 1$ and 
$2i + 2$, assuming a left-to-right preorder traversal with root node $0$.
This in turn results in a unique sequence generated via preorder traversal 
for each tree.

The problem with fixing the structure of trees to complete binary trees is that 
not every term $f$ in a grammar $G$ is guaranteed to be complete binary.
However, we observe the fact that such $f$ can still be 
\emph{embedded} within a complete binary tree, 
because the maximum arity of operators in $G$ is two (following the definition 
of $\gtgt$ in \S\ref{Se:prelim}).
Based on this observation, we will construct an extended grammar $G_{\bin}$ 
for $G$, that intuitively adds some `dummy' syntax to $G$ with the 
following two goals:
\rone that every valid 
term inside $G$ has a corresponding complete binary representation 
in $G_{\bin}$, and 
\rtwo a synthesis problem $\sy$ defined over $G$ is realizable if and 
only if $\sy$ defined over $G_{\bin}$ (with the same domain and specification) 
is realizable.
This idea is formalized by Definition~\ref{def:complete-binary-form}, which defines 
the dummy syntax, and Definition~\ref{def:null-sem}, which defines its semantics.


\begin{definition}[Complete Binary Form]
  \label{def:complete-binary-form}
  Let $G$ be some arbitrary subgrammar of $\gtgt$.
  Then one can define a new grammar $G_{\bin}$ as following, by adding a new 
  nonterminal $\nullnt$, 
  a binary operator $\nullop$, and a $0$-ary leaf operand $\nullleaf$:
  \begin{itemize}
    \item All operators and operands originally in $G$ are binary in $G_{\bin}$ 
      (e.g., $0$ is a binary operator in $G_{\bin}$).
    \item Operators whose arity has been increased via the previous item 
      have may only have their additional operands as the new nonterminal $\nullnt$ 
      (e.g., $0$ is now $0(\nullnt, \nullnt)$).
    \item $\nullnt$ consists of the productions \\
      $\nullnt ::= \nullleaf \mid \nullop(\nullnt, \nullnt)$.
  \end{itemize}
  Given a grammar $G_{\bin}$ defined in this manner, we say that 
  $G_{\bin}$ is the \emph{complete binary form} of $G$.
\end{definition}

$\nullnt, \nullop,$ and $\nullleaf$ consist the dummy syntax; a complete binary tree 
in $G_{\bin}$ can be treated as a tree in which a term from $G$ is embedded at the root.

Following Definition~\ref{def:complete-binary-form}, 
we must give a semantics to the dummy syntax such that, as previously mentioned, 
a synthesis problem $\sy$ defined over a grammar $G$ is realizable if and only if
$\sy$ defined over the complete binary form of $G$ is realizable.
Intuitively, this can be achieved by letting the dummy nodes result in dummy values, 
while the non-dummy nodes retain their original semantics.

\begin{definition}[Semantics of $\nullop$ and $\nullleaf$]
  \label{def:null-sem}
  Let $\nullval$ denote a dummy value.
  Then the semantics of $\nullop$ is defined as $\sem{\nullop(t_1, t_2)}(\sigma) = \nullval$ for 
  arbitrary terms $t_1, t_2 \in \ltgt$ and an arbitrary state $\sigma$.

  The semantics of operators originally in $\ltgt$ are modified such that if 
  at least one of their operands are $\nullval$, then the operator also yields $\nullval$.
  For example, the semantics of $+$ may be defined as:
  {\small
  \[
    \sem{t_1 + t_2}(\sigma) = 
      \Eifthenelse
        {(\sem{t_1}(\sigma) = \nullval \vee \sem{t_2}(\sigma) = \nullval)}
        {\nullval}
        {\sem{t_1}(\sigma) + \sem{t_2}(\sigma)}
  \]
  }
  In particular, the semantics of operators whose arity are changed in the complete binary form 
  (e.g., the 0-ary variable $x$) remain unchanged.
  For example, $x$ is also now a binary operator, but the semantics of 
  $x$ remain unchanged, as $\sem{x}$ does \emph{not} depend on the behavior of its subterms:
  {\small
  \[
    \sem{x(t_1, t_2)}(\sigma) = \Eifthenelse{(\sigma = \nullval)}{\nullval}{\sigma[x]}
  \]
  }
\end{definition}

Definition~\ref{def:null-sem} formalizes the dummy semantics: dummy nodes result in
dummy values, and dummy values $\nullval$ are propagated if they appear as an operand.
An important part to note about this propagation is that in Definition~\ref{def:null-sem},
the semantics of $0$-ary and $1$-ary operators do not change: 
this prevents the final result of evaluating any term 
that contains $\nullleaf$ or $\nullop$ from being $\nullval$.

\begin{example}
  \label{ex:complete-binary-tree}
  Consider the following simple grammar $E$: 
  \[
    E ::= 1 \mid x \mid E + E
  \]
  The complete binary form $E_{\bin}$ of $E$ is defined as following:
  \begin{align*}
    E ::=       & \ 1(\nullnt, \nullnt) \mid x(\nullnt, \nullnt) \mid E + E \\
    \nullnt ::= & \ \nullleaf \mid \nullop(\nullnt, \nullnt)
  \end{align*}

  Consider a term $1 + x + 1$ (which can also be written as $+ (+ (1, x), 1)$ in preorder form) 
  in $L(E)$.
  This term has a corresponding complete binary tree representation in $E_{\bin}$, namely 
  $+ (+ (1 (\nullleaf, \nullleaf), x(\nullleaf, \nullleaf)),  
  1(\nullop(\nullleaf, \nullleaf), \nullop(\nullleaf, \nullleaf)))$.
 
  
  
  Observe that the semantics of the original and complete binary 
  terms are equivalent: this is because although 
  $\nullleaf$ and $\nullop$ result in dummy values, $0$-ary operators such as $1$ and $x$ do not 
  propagate these dummy values when computing their semantics.
\end{example}

Definition~\ref{def:null-sem} allows us to state the correctness of 
the complete binary form as a theorem.

\begin{theorem}[Soundness of the Complete Binary Form]
  \label{thm:bin-soundness}
  Let $\sy$ be a synthesis problem defined over a grammar $G$ and specification $\phi$, 
  and $G_{\bin}$ be the complete binary form of $G$.
  Then there exists $f \in G$ such that $\sy$ is realizable, if and only if there exists
  a complete binary 
  $f_{\bin} \in G_{\bin}$ such that $\sy$ defined over $G_{\bin}$ and $\phi$ is realizable.
\end{theorem}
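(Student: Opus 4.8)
The plan is to establish a bijection between terms of $G$ and complete binary trees of $G_{\bin}$ that preserves semantics, and then argue that this bijection preserves realizability of the synthesis problem. I would set up a pair of maps: an \emph{embedding} $\mathsf{emb} : L(G) \to L(G_{\bin})$ that takes a term $f \in L(G)$ and pads it into a complete binary tree by replacing each operator with its arity-increased binary version in $G_{\bin}$, filling the newly-introduced operand slots with the $\nullnt$-subtrees $\nullop(\nullleaf, \nullleaf), \ldots$ (and $\nullleaf$ for the leaf-level slots), and a \emph{projection} that strips the dummy syntax back out. The embedding is well-defined precisely because Definition~\ref{def:complete-binary-form} makes every operator binary and restricts the added operands to range over $\nullnt$, so any arity-$\leq 2$ term of $\gtgt$ (hence of $G$) has a canonical complete-binary padding. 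I would check that $\mathsf{emb}$ and projection are mutually inverse on the relevant domains, giving the desired correspondence between $f \in G$ and complete binary $f_{\bin} \in G_{\bin}$.

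The key technical step is the \emph{semantic equivalence} $\sem{\mathsf{emb}(f)}(\sigma) = \sem{f}(\sigma)$ for every term $f \in L(G)$ and every state $\sigma$, which I would prove by structural induction on $f$. The crucial observation, already highlighted after Definition~\ref{def:null-sem}, is that the dummy value $\nullval$ is never produced as the final result of a padded term: in the base case, a $0$-ary operand such as $x$ or $1$ becomes a binary operator in $G_{\bin}$ but, by Definition~\ref{def:null-sem}, its semantics ignores its operands (e.g. $\sem{x(t_1,t_2)}(\sigma) = \sigma[x]$ when $\sigma \neq \nullval$), so it yields exactly its original value rather than $\nullval$. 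In the inductive step, for a genuine operator of $G$ (say $+$), the induction hypothesis gives that both real operands evaluate to their original non-dummy values, so the $\nullval$-propagation guard in Definition~\ref{def:null-sem} never fires and $\sem{t_1 + t_2}$ reduces to the original $\sem{t_1}(\sigma) + \sem{t_2}(\sigma)$. The padding subtrees rooted at $\nullnt$ only ever feed into slots that the surrounding real operator discards, so they cannot corrupt the computed value.

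With the semantic equivalence in hand, both directions of the biconditional follow directly from Definition~\ref{def:synth-def}. For the forward direction, if $\sy$ over $G$ is realizable, there is $f \in G$ with $\forall \sigma \in D.\, \phi(\sigma, f, \sem{f}(\sigma))$; taking $f_{\bin} = \mathsf{emb}(f)$, which is complete binary and lies in $L(G_{\bin})$, the semantic equivalence gives $\sem{f_{\bin}}(\sigma) = \sem{f}(\sigma)$ for all $\sigma$, so $\phi$ holds identically and $\sy$ over $G_{\bin}$ is realizable. For the reverse direction, given a complete binary witness $f_{\bin} \in L(G_{\bin})$, its projection back to $L(G)$ is a valid term of $G$ with the same semantics (again by the equivalence applied to the projected term), so $\sy$ over $G$ is realizable.

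I expect the main obstacle to be the \emph{base-case bookkeeping} in the structural induction, specifically verifying that arity-increased $0$-ary and $1$-ary operators genuinely decouple from their dummy operands. One must confirm that Definition~\ref{def:null-sem} treats these operators as not propagating $\nullval$ (unlike the genuine binary operators, which do propagate), since it is exactly this asymmetry that guarantees no padded term ever evaluates to $\nullval$ at the top level. A secondary subtlety is ensuring that $\mathsf{emb}(f)$ is itself a \emph{well-formed} member of $L(G_{\bin})$ — i.e. that the dummy operands placed in the extra slots are derivable from $\nullnt$ and that this padding can always be carried out to produce a genuinely complete binary tree of the required uniform height — which relies on the freedom in Definition~\ref{def:complete-binary-form} to expand $\nullnt$ to arbitrary depth via $\nullnt ::= \nullop(\nullnt, \nullnt)$.
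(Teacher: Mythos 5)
Your proposal is correct and follows essentially the same route as the paper's proof: embed each $f \in G$ into a complete binary tree by padding with dummy nodes for the forward direction, strip the dummy nodes for the reverse direction, and rest both on the semantic-preservation property of Definition~\ref{def:null-sem} (you simply spell out the structural induction that the paper leaves implicit). The only minor imprecision is calling the correspondence a \emph{bijection} --- a term of $G$ admits complete binary paddings of every sufficiently large height --- but your argument only uses the embedding and its left-inverse projection, so nothing breaks.
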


\begin{proof}
  To see that $\sy$ defined over $G_{\bin}$ is realizable if $\sy$ defined over $G$ is realizable,
  we will show that for every term $f \in G$, there exists a corresponding $f_{\bin} \in G_{\bin}$
  such that $f_{\bin}$ is a complete binary tree and $\sem{f} = \sem{f_{\bin}}$.
  The proof for this is simple: let the height (of the tree representation of) $f$ be $h$.
  Create a complete binary tree $f_{\bin}$ of height $h + 1$, such that $f$ is embedded into $f_{\bin}$ such
  that their root nodes coincide, and the rest of the nodes are dummy nodes ($\nullop$ if the node has children,
  $\nullleaf$ if it does not).
  $f_{\bin}$ is guaranteed by construction to be a term of $G_{\bin}$ and to have a semantics equivalent to
  $\sem{f}$, and thus $\sy$ defined over $G_{\bin}$ is realizable through $f_{\bin}$.

  To see that $\sy$ defined over $G$ is realizable if $\sy$ defined over $G_{\bin}$ is realizable,
  observe that if $\sy$ over $G_{\bin}$ is realizable, there exists $f_{\bin} \in G_{\bin}$
  such that $\sem{f_{\bin}} \neq \nullval$.
	Remove all dummy nodes in $f_{\bin}$ to get $f \in G$ that witnesses
  the realizability of $\sy$ defined over $G$.	
\end{proof}

Notice a small caveat with Theorem~\ref{thm:bin-soundness}, in that it says nothing about
$f_{\bin}$ being a complete binary tree.
This is because accepting only complete binary trees is outside the power of a regular grammar.
Nevertheless, the construction of $G_{\bin}$ guarantees
that every term in $G_{\bin}$ also has an equivalent complete binary representation,
which ensures that if $\sy$ over $G_{\bin}$ is realizable, then it is also realizable via a
term that is a complete binary tree.

\begin{lemma}
  \label{lem:complete-binary}
  Let $G_{\bin}$ be a complete binary form of a grammar $G$.
  Then for any $f \in G_{\bin}$, there also exists $f' \in G_{\bin}$ such that
  $f'$ is a complete binary tree and $\sem{f} = \sem{f'}$.
\end{lemma}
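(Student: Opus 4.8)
The plan is to exploit the fact, noted in the caveat after Theorem~\ref{thm:bin-soundness}, that the only source of non-completeness in a term of $G_{\bin}$ is the dummy nonterminal $\nullnt$: a given $f$ fails to be a complete binary tree precisely because its dummy subtrees bottom out at different depths. Since the productions $\nullnt ::= \nullleaf \mid \nullop(\nullnt, \nullnt)$ can derive a complete binary subtree of \emph{any} height, I would simply pad every dummy subtree of $f$ downward until all of its leaves sit at a common depth, producing the desired $f'$.

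First I would record the two structural observations that make this work. By Definition~\ref{def:complete-binary-form}, the only $0$-ary symbol of $G_{\bin}$ is $\nullleaf$, while every operator originally in $G$ is binary, as is $\nullop$. Hence every leaf of $f$ is a $\nullleaf$, and every such $\nullleaf$ occurs inside a maximal $\nullnt$-derived subtree---one hanging off a $\nullop$, off an added operand slot of an arity-increased real operator, or off the dummy slot of the unary operator $!$. Consequently, if $d$ is the depth of a real $0$-ary operator and $H$ is the height of $f$, then its $\nullnt$-children lie at depth $d+1 \le H$, so every real leaf sits at depth at most $H-1$, and there is always room to grow the dummy subtrees beneath it down to depth $H$.

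Next I would define the rewriting: letting $H$ be the height of $f$, I replace each maximal $\nullnt$-subtree, rooted at some depth $d' \le H$, by the unique complete binary $\nullnt$-tree of height $H - d'$, which is a valid $\nullnt$-derivation. Call the result $f'$. Two things must then be checked. First, $f'$ is a complete binary tree of height $H$: all its internal nodes are binary (real operators and $\nullop$), and by construction every $\nullleaf$---the only kind of leaf---now lies at depth exactly $H$, which is precisely the condition that every parent has two children and all leaves share the bottom level. Second, $\sem{f} = \sem{f'}$, which I would argue by induction over the shared real backbone using Definition~\ref{def:null-sem}: every $\nullnt$-subtree evaluates to $\nullval$ in both $f$ and $f'$; an arity-increased $0$-ary or unary real operator ignores its dummy operands, so its value is unchanged; and a genuinely binary real operator depends only on its two real children, which are identical in $f$ and $f'$ by the induction hypothesis.

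The main obstacle---and the only place requiring real care---is verifying that the padding equalizes \emph{all} root-to-leaf path lengths, so that $f'$ is genuinely complete rather than merely deeper. The subtlety is that real binary operators cannot be padded, since altering their subtrees would alter the semantics; completeness must therefore be forced entirely through the dummy subtrees. The depth bound from the second paragraph (every real leaf at depth $\le H-1$) is exactly what guarantees that each real branch can be extended down to the common depth $H$ using only $\nullnt$-expansions, and hence that no real structure ever obstructs completeness.
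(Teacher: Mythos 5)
Your proposal is correct, and it follows essentially the same route the paper intends: the paper states this lemma without a separate proof, appealing to the same dummy-filling idea used in the forward direction of Theorem~\ref{thm:bin-soundness} (embed the real structure in a complete binary tree and pad the rest with $\nullop$/$\nullleaf$, which is semantically inert by Definition~\ref{def:null-sem}). Your version, which pads each maximal $\nullnt$-subtree in place down to the common depth $H$ and checks that every real leaf sits at depth at most $H-1$, is just a more carefully worked-out instance of that construction.
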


Lemma~\ref{lem:complete-binary} and Theorem~\ref{thm:bin-soundness} allow us to consider 
only complete binary trees as solutions when considering the realizability of a 
synthesis problem.

%

We wrap this section up with a recap of why we exactly introduced complete 
binary trees as a solution for dealing with syntax: encoding terms using the 
ideas machinery in this 
section will greatly simplify encoding the semantics $\sem{\cdot}$ as a first-order 
formula in \S\ref{SubSe:semantics}.
It is true that there are many other, perhaps simpler,
solutions for checking that $f$ is syntactically correct; 
for example, an alternative approach could be to directly encode 
the automaton for a grammar $G$ as a formula instead.
However, it is difficult, or perhaps unintuitive, to generalize automata towards 
encoding the computation of a semantics, 
while the idea of manipulating trees that embed sequences directly as part of a formula 
will allow for a very natural encoding of semantics in \S\ref{SubSe:semantics}.
In particular, we will rely extensively on the fact that the complete binary 
tree encoding allows one to compute the \emph{indices} of a children node from 
given the index of a parent node, which will allow us to directly mimic 
the computation of recursive big-step semantics using a formula.


\subsection{Constructing $\sem{\cdot}$ for a Loop-Free Fragment of $\ltgt$}
\label{SubSe:semantics}

Before constructing $\sem{\cdot}$ for the entirety of terms in $\ltgt$, let us first 
illustrate the key ideas for constructing $\sem{\cdot}$ for a loop-free fragment of 
$\ltgt$.
Starting with the loop-free fragment will provide a much more clear picture of the 
overall encoding; later in \S\ref{SubSe:loops}, we will extend the ideas presented in 
this section towards loops as well.

\myparvs{Value Trees for Computing Semantics}
The construction of $\sem{\cdot}$ relies on the intuition that one may construct a tree 
that encodes the execution of a program, 
similar to how in
\eqref{while-sem} from 
Lemma~\ref{lem:while-sem}, 
the sequence $\sigma_0, \cdots, \sigma_k$ encodes the execution of a while loop.
Intuitively, instead of a sequence of states as in \eqref{while-sem}, 
we rely on a \emph{tree of values} with a structure identical to the term being 
evaluated (which in turn may be encoded as a formula by relying on what we developed in 
\S\ref{SubSe:syntax}).
Example~\ref{ex:semantic-tree} illustrates an example of how such a value tree 
would be constructed for a simple term.

\begin{example}
  \label{ex:semantic-tree}
  Recall the complete binary representation of the term $1 + x + 1$
  from Example~\ref{ex:complete-binary-tree}.
  Diagram~\ref{Eq:diag_2} illustrates the syntax tree for the 
  complete binary term (on the left) and the corresponding value tree (on the right) 
  for an input state $\set{x = 3}$.

  \begin{minipage}{0.9\linewidth}
    \begin{center}
    $\quad$
    \begin{tabular}{cc}
 
    \begin{tikzpicture}[scale=.8]
      \Tree[.$+$ [.$+$ [.$1$ $\nullleaf$ $\nullleaf$ ] 
                                 [.$x$ $\nullleaf$ $\nullleaf$ ] ] 
                       [.$1$ [.$\nullop$ $\nullleaf$ $\nullleaf$ ] 
                                 [.$\nullop$ $\nullleaf$ $\nullleaf$ ] ] ]
    \end{tikzpicture}
      &
    \begin{tikzpicture}[scale=.8]
      \Tree[.$6$ [.$4$ [.$1$ $\nullval$ $\nullval$ ] 
                                 [.$3$ $\nullval$ $\nullval$ ] ] 
                       [.$1$ [.$\nullval$ $\nullval$ $\nullval$ ] 
                                 [.$\nullval$ $\nullval$ $\nullval$ ] ] ]
    \end{tikzpicture}
    \end{tabular}
    \end{center}
  \end{minipage}
  \begin{minipage}{0.05\linewidth}
    \begin{equation}
      \label{Eq:diag_2}
    \end{equation}
  \end{minipage}

  Observe how the value tree contains the values that one would obtain 
  by evaluating each subexpression starting at the root of the 
  syntax tree.
  For example, the second node (for the lower-left $+$ operator in the syntax tree) 
  in the value tree is $4$, as $+$ adds $1$ and $3$ (the values from its children nodes) 
  and adds them to produce $4$.
\end{example}

In essence, value trees 
encode a bottom-up computation of a specific term as a tree.
Having a fixed representation of the computation then allows us to then apply 
a `guess-and-check' system as a formula 
similar to Lemma~\ref{lem:while-sem}, where 
one can check that each node has computed the correct value 
according to its operator.

\begin{example}
  \label{ex:value-tree}
  Reconsider the value tree from Example~\ref{ex:semantic-tree}, which may be represented as 
  $\lrangle{6, \mymathcolor{dgreen}{4}, 1, 
    \mymathcolor{cyan}{1}, \mymathcolor{cyan}{3}, \nullval, \nullval, \nullval, \nullval, \nullval, 
    \nullval, \nullval, \nullval, \nullval, \nullval}$ as a preorder-traversal sequence 
    (values to be referenced are color-coded).
  An encoding of $\sem{\cdot}$ as a first-order formula would then 
  check whether each value in this sequence is identical to the values obtained by 
  evaluating $1 + x + 1$ in a bottom-up fashion.

  For example, 
  the check at $\mymathcolor{dgreen}{t_1}$, which has $+$ as its operator in the syntax tree, 
  checks if $\mymathcolor{dgreen}{t_1} = \mymathcolor{cyan}{t_3} + \mymathcolor{cyan}{t_4}$ 
  ($t_i$ indicates the $i$-th node).
  Swapping in the values 
  from the sequence, we obtain $\mymathcolor{dgreen}{4} = \mymathcolor{cyan}{1} + \mymathcolor{cyan}{3}$ 
  (which is clearly true, as this value tree is correct).

  On the other hand, suppose that the value tree was malformed, represented by the sequence 
  $\lrangle{6, \mymathcolor{dred}{5}, 1, 
    \mymathcolor{cyan}{1}, \mymathcolor{cyan}{3}, \nullval, \nullval, \nullval, \nullval, \nullval, 
    \nullval, \nullval, \nullval, \nullval, \nullval}$ instead 
  (the value indicated in {\color{dred}red} has changed from $\mymathcolor{dgreen}{4}$ to 
   $\mymathcolor{dred}{5}$).
  In this case, the node checks whether 
  $\mymathcolor{dred}{5} = \mymathcolor{cyan}{1} + \mymathcolor{cyan}{3}$ instead, 
  which is clearly false---meaning this value tree is wrong, and that a correct 
  encoding of $\sem{\cdot}$ as a formula should reject such value trees.
\end{example}

\mypar{Sequential Composition}
One operator for which it may not be immediately clear how to check a bottom-up 
computation is sequential composition ($\Eseq{s_1}{s_2}$), 
which is iterative and often not 
computed by composing the results of $s_1$ and $s_2$.
This challenge can be solved by extending the value tree to contain a \emph{pair} of values 
$\sigma^{\stin}$ and $\sigma^{\stout}$, where $\sigma^{\stin}$ captures the 
\emph{input} state to a node and $\sigma^{\stout}$ captures the output state / value.
Then the semantics of sequential composition can be captured by a formula 
as illustrated in Example~\ref{ex:sequential}.

\begin{example}
  \label{ex:sequential}
  Consider a fragment of a value tree where the 
  {parent} node is a sequential composition 
  $\Eseq{s_1}{s_2}$, 
  containing the pair of states 
  $(\sigma^{\stin}_p, \sigma^{\stout}_p) = \mymathcolor{dgreen}{(\set{x = 3}, \set{x = 5})}$.
  Assume that the first child $s_1$ contains 
  $(\sigma^{\stin}_{\cone}, \sigma^{\stout}_{\cone}) = \mymathcolor{cyan}{(\set{x = 3}, \set{x = 4})}$ 
  in the value tree, 
  while the second child $s_2$ contains 
  $(\sigma^{\stin}_{\ctwo}, \sigma^{\stout}_{\ctwo}) = \mymathcolor{dred}{(\set{x = 4}, \set{x = 5})}$ 
  (values to be compared are color-coded for presentation).

  A formula can check whether this value tree is valid with respect to the semantics of 
  sequential composition by performing the following three checks:
  \begin{itemize}
    \item $\sigma^{\stin}_p == \sigma^{\stin}_{\cone}$, i.e., 
      $\mymathcolor{dgreen}{\set{x = 3}} == \mymathcolor{cyan}{\set{x = 3}}$: 
      Does the input state of $\Eseq{s_1}{s_2}$ match the input state of $s_1$?
    \item $\sigma^{\stout}_{\cone} == \sigma^{\stin}_{\ctwo}$, i.e., 
      $\mymathcolor{cyan}{\set{x = 4}} == \mymathcolor{dred}{\set{x = 4}}$: 
      Does the output state of $s_1$ match the input state of $s_2$?
    \item $\sigma^{\stout}_{\ctwo} == \sigma^{\stout}_{p}$, i.e., 
      $\mymathcolor{dred}{\set{x = 5}} == \mymathcolor{dgreen}{\set{x = 5}}$: 
      Does the output state of ${s_2}$ match the output state of $\Eseq{s_1}{s_2}$?
  \end{itemize}
\end{example}
One can clearly see that the three checks in Example~\ref{ex:sequential} 
model the semantics of sequential composition, the semantics of which are commonly given as 
$\sem{\Eseq{s_1}{s_2}}(\sigma) = \sem{s_2}(\sem{s_1}(\sigma))$ as well.

\myparvs{Constructing a First-Order Representation of $\sem{\cdot}$}
Having developed the key ideas for encoding $\sem{\cdot}$ as a first-order formula, 
we move to illustrating at a high level how \eqref{synth-def} (the second-order definition 
of program synthesis) can be reconstructed as a first-order formula.

Following Theorem~\ref{thm:bin-soundness}, we will assume that our synthesis problem 
$\sy$ is defined over a grammar in complete binary form. 
We will thus assume that $f$ is also a complete binary term, and represent $f$ 
with a preorder traversal of its syntax tree
$\lrangle{p_0, \cdots, p_{\h}}$, where $h$ is the height of the tree.
We have already established that finite sequences may be encoded as integers; 
so we further compress the sequence as a pair $(a_p, b_p)$.
This gives us a formula of the form:
{\small
\begin{equation}
  \label{Eq:synth-term-replaced}
  \sy \defeq \exists a_p, b_p. \forall \sigma \in D. (a_p, b_p) \in G \wedge 
  \phi(\sigma, a_p, b_p, \sem{\cdot}(a_p, b_p)(\sigma))
\end{equation}
}
We drop the domain $D$ for simplicity, remove the syntax check 
$(a_p, b_p) \in G$, which may be encoded as part of the semantics $\sem{\cdot}$.

We now introduce the value tree, also as a pair of integers $(a_v, b_v)$.
The value tree must differ for each input (as it essentially represents the 
computation taking place for each input) and thus we introduce it as an existential 
behind the input $\sigma$.
Observe that the value tree also contains the output value of $f$, 
i.e., $\sem{f}(\sigma)$ in \eqref{synth-def}, so we hoist $\sem{\cdot}$ out of 
$\phi$ to simplify the formula and obtain:
{\small
\begin{equation}
  \label{Eq:synth-output-replaced}
  \sy \defeq \exists a_p, b_p. \forall \sigma. \exists a_v, b_v.
    \sem{\cdot}(a_p, b_p)(\sigma)(a_v, b_v) \wedge \phi(\sigma, a_p, b_p, a_v, b_v)
\end{equation}
}
In \eqref{synth-term-replaced} and \eqref{synth-output-replaced}, we assume that 
references to specific parts of trees that are represented by a pair of integers 
(e.g, $\phi$ referencing $\sem{\cdot}(a_p, b_p)(\sigma)$, which is the root node of the value tree 
represented by $(a_v, b_v)$) are appropriately encoded following Lemma~\ref{lem:beta}.

What is left then, is to construct a formula for $\sem{\cdot}$ that checks whether 
$(a_v, b_v)$ represents a valid value tree according to the semantics of the 
syntax tree $(a_p, b_p)$.
We have already introduced how to perform this check on a local per-node basis; 
extending this check is easy thanks to the complete binary representation, 
because one may compute the indices of children nodes from the index of the parent node.
This allows us to simply iterate over the nodes in the tree while performing the check 
using a bounded quantifier, as in Lemma~\ref{lem:while-sem}.
\eqref{sem-def} captures this idea on a high level.\footnote{
  Strictly speaking, \eqref{sem-def} should also check that the nodes corresponding to 
  input values in the value tree are equal to the input state $\sigma$ in 
  \eqref{synth-output-replaced}, and also the syntax check $(a_p, b_p) \in G$; 
  these checks are straightforward and thus omitted for brevity.
}
{\small
\begin{equation}
  \label{Eq:sem-def}
  \begin{split}
    \sem{\cdot}(a_p, b_p)(\sigma)(a_v, b_v) \defeq &
    \forall i, 0 \leq i \leq 2^{h - 1} - 2. \semcheckl(p_i, v_i, v_{2i + 1}, v_{2i + 2}) \wedge \\
  & \forall i, 2^{h - 1} - 1 \leq i \leq \h. \leafcheckl(p_i, v_i)
  \end{split}
\end{equation}
}
In \eqref{sem-def}, $h$ represents the height of the syntax / value trees, 
$t_i$ the $i$-th node of the syntax tree, and $v_i$ the $i$-th node of the value tree 
(all of which can be recovered from $(a_p, b_p)$ and $(a_v, b_v)$ by Lemma~\ref{lem:beta}).
$\semcheckl$ encodes the local check we have developed in this section, while 
$\leafcheckl$ performs a similar check for leaf nodes (which do not have children nodes); 
the first line of \eqref{sem-def} simply checks that the value tree is correct
for non-leaf nodes while 
the second line checks correctness for leaf nodes.

\eqref{sem-def} correctly encodes the operation of the semantics function $\sem{\cdot}$, 
in the sense that $\sem{\cdot}(a_p, b_p)(a_v, b_v)$ will evaluate to $\Etrue$ if and 
only if $(a_p, b_p)$ and $(a_v, b_v)$ respectively encode a function $f$ and an input-output pair 
$(\sigma, \pi)$ such that $\sem{f}(\sigma) = \pi$.
Because $\sem{\cdot}$ as defined in \eqref{sem-def} is a
first-order formula in the standard model of arithmetic, it follows that \eqref{synth-output-replaced} 
is also a first-order formula: one that is equivalent to the definition of synthesis 
as in \eqref{synth-def}.

\subsection{Extending Value Tree to Support Loops}
\label{SubSe:loops}

Having constructed a first-order formula for synthesis problems over loop-free 
languages in \S\ref{SubSe:semantics}, we now proceed to show that 
the idea of the value tree and local checks can be extended to support 
loops as well---thus allowing \eqref{synth-output-replaced} to encode the 
full range of synthesis problems as a formula.

\myparvs{Nested Sequences in the Value Tree for Loops}
To understand how to perform a local parent-child check for the semantics of loops, 
consider an example illustrated in Figure~\ref{fig:state-transition-naive}, 
which has a parent node $\Ewhile{b_1}{s_1}$ that loops twice.
To make the presentation simpler, we will temporarily assume that 
while loops loop nondeterministically instead of having a 
loop guard; this will allow us to consider nested loops with more ease.
We first illustrate how the semantics of loops may be checked in a bottom-up fashion 
as in \eqref{sem-def} by allowing the value tree to contain 
sequences of values.

From the semantics of loops encoded as a formula as in \eqref{while-sem}, 
the parent node ($\Ewhile{b_1}{s_1}$) should at least 
contain the sequence of states produced by iterating through the loop 
$\lrangle{\sigma_0, \sigma_1, \sigma_2}$,
as illustrated in Figure~\ref{fig:state-transition-naive}.
One way to view this sequence, following our previous input-output pair intuition, 
is that the ends of the sequence $(\sigma_0, \sigma_2)$ encode the input-output pair 
of $\Ewhile{b_1}{s_1}$---indeed, the semantics of a 
loop repeating twice will return $\sigma_2$ given $\sigma_0$ as input---and that 
the remaining state $\sigma_1$ is an intermediate state required to 
validate the input-output pair $(\sigma_0, \sigma_2)$ as correct.

For the parent node $\Ewhile{b_1}{s_1}$ to validate that 
the sequence $\lrangle{\sigma_0, \sigma_1, \sigma_2}$ is correct 
according to the semantics of the loop body $s_1$, the 
child node for $s_1$
now must contain a sequence of \emph{input-output pairs}, one 
for each transition of the parent: say, 
$\lrangle{(\sigma^{\stin}_0, \sigma^{\stout}_0), (\sigma^{\stin}_1, \sigma^{\stout}_1)}$.
Then the parent node can check whether:
\begin{itemize}
  \item $\sigma_0 = \sigma^{\stin}_0 \wedge \sigma_1 = \sigma^{\stout}_0$: i.e., 
    does $\sem{s_1}(\sigma_0) = \sigma_1$?
  \item $\sigma_1 = \sigma^{\stin}_1 \wedge \sigma_2 = \sigma^{\stout}_1$: i.e., 
    does $\sem{s_1}(\sigma_1) = \sigma_2$?
\end{itemize}
The value tree in Figure~\ref{fig:state-transition-naive} contains a value tree 
for which this check evaluates to $\Etrue$.
Extending this check to sequences of arbitrary length, one can see that 
this captures exactly the iterative check in line 2 of 
\eqref{while-sem} of Lemma~\ref{lem:while-sem} (modulo the branch condition), 
showing that the semantics of loops also can be checked in a bottom-up fashion 
given that the value tree contains enough information.

\begin{figure*}
  \begin{subfigure}{0.47\linewidth}
    \includegraphics[width=\linewidth]{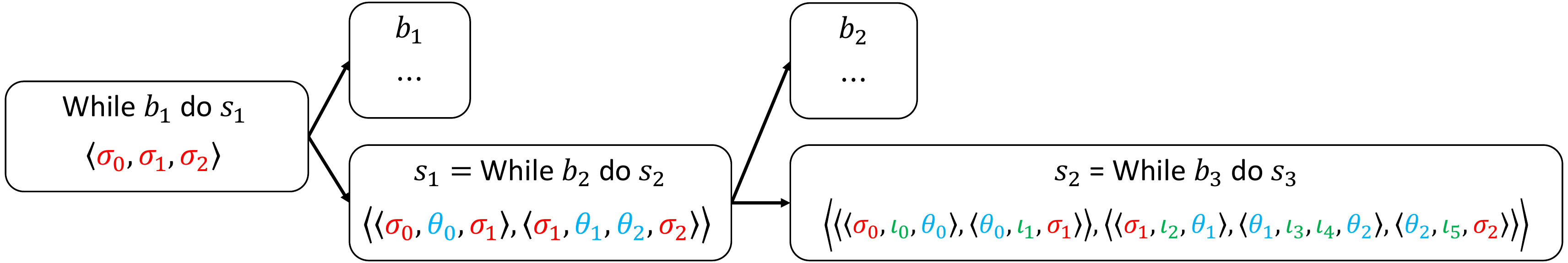}
    \caption{An example value tree considering a term with nested loops, containing nested sequences of states.}
    \label{fig:state-transition-naive}
  \end{subfigure} \hspace{0.05\linewidth}
  \begin{subfigure}{0.47\linewidth}
    \includegraphics[width=\linewidth]{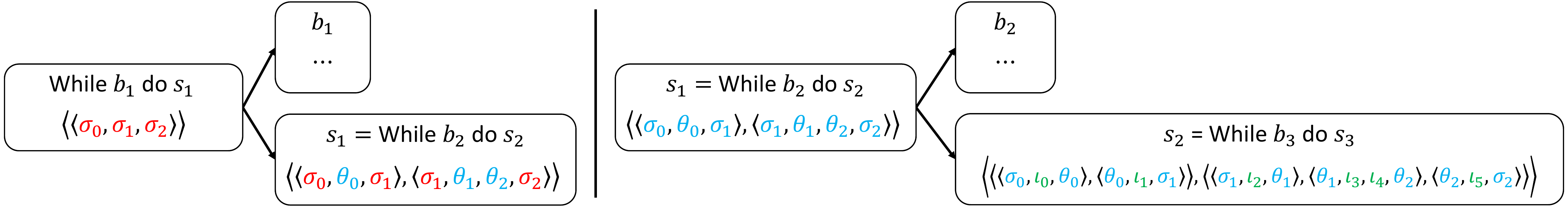}
    \caption{A local parent-child view of the value tree illustrated on the left, where the connection 
    between parent and grandchild is lost.}
    \label{fig:state-transition-answer}
  \end{subfigure}
  \caption{
    Value trees that illustrate the values required for checking the semantics of loops in a bottom-up fashion.
    States are color-coded with respect to where they originate: in Figure~\ref{fig:state-transition-naive}, note how 
    the local parent-child view for $s_1$ and $s_2$ 
    does not provide the information that, e.g., $\sigma_0$ originated from the grandparent $\Ewhile{b_1}{s_1}$.
  }
  \label{fig:state-transition-full}
\end{figure*}

However, things get more complicated when the child $s_1$ is itself a loop $\Ewhile{b_2}{s_2}$, 
as in Figure~\ref{fig:state-transition-naive}.
If $s_1$ is a loop, $s_1$ itself must validate its transitions through 
the use of intermediate states---for example, in Figure~\ref{fig:state-transition-naive}, 
the child $s_1$ itself is a loop that loops twice on the input $\sigma_0$.
Then like parent, $s_1$ will itself 
require an intermediate state (denoted $\theta_0$) 
to check the semantics of loops on $\sigma_0$, 
and also other intermediate states $\theta_1, \theta_2$, provided $s_1$
loops on $\sigma_1$ as well.

To support such scenarios in general, the value tree must contain nested sequences of 
arbitrary depth; it then becomes possible to check the semantics of loops also 
in a bottom-up fashion as previously described.
Unfortunately, 
such a variable nesting of sequences is difficult to encode as a formula, 
at least using only the $\beta$-function: nested applications of 
the $\beta$-function can encode nested sequences, but the problem is that 
the number of nestings, i.e., the number of nested applications, 
depends on the variable $l$.

\myparvs{2-nested Sequences in the Value Tree for Loops}
To fix this problem, while preserving the intuition of relying on 
sequences of states to check the semantics of loops, 
we will modify our approach to use nested sequences of maximum depth 2 
(that is, sequences-of-sequences-of-states) instead.
The key intuition is that, while one does require a nested structure to check 
that a parent loop is correctly iterating over a child body, 
this structure does not need to be preserved between parent and \emph{grandchild}.

To see this, consider Figure~\ref{fig:state-transition-answer}, where 
the example from Figure~\ref{fig:state-transition-naive} is split into a 
parent-child and child-grandchild view.

In the left of Figure~\ref{fig:state-transition-answer}, 
one can observe that a parent-child relation does require a doubly nested 
sequence of states: 
for the parent node $\Ewhile{b_1}{s_1}$ to be able to 
check the transition, e.g., $\sigma_1$ to $\sigma_2$,
the parent $\Ewhile{b_1}{s_1}$ must be able to index 
$\sigma_1$ and $\sigma_2$ in the child node.
However, because the child $s_1$ may also require an arbitrarily long sequence to loop from 
$\sigma_1$ to $\sigma_2$, the nesting is required to provide \emph{structure} that 
the parent $\Ewhile{b_1}{s_1}$ can rely on to find the transition from $\sigma_1$ to $\sigma_2$ in 
the child node for $s_1$.

However, such a structure need not be preserved between parent and grandchild---the 
right of Figure~\ref{fig:state-transition-answer} illustrates how a grandchild $s_2$
needs not check a parent transition, 
e.g., from $\sigma_1$ to $\sigma_2$.
Instead, $s_2$ is only interested in validating the transitions 
that the child $s_1$ requires: e.g., $\sigma_1$ to $\theta_1$, $\theta_1$ to $\theta_2$, and 
$\theta_2$ to $\sigma_2$, where $s_2$ actually does not care whether 
$\sigma_1$ and $\sigma_2$ originated from the node for $\Ewhile{b_1}{s_1}$ or the 
node for $s_1$.
From the perspective of $s_1$, that the node for $s_2$ provides enough information 
to check its own transitions---without knowledge of whether, e.g., $\sigma_1$ 
originated from the parent $\Ewhile{b_1}{s_1}$ or not---is enough for itself 
to check the transition from $\sigma_1$ to $\sigma_2$.

Following this idea, it thus suffices that nodes in the value tree contain two-nested 
sequences of state, each of which satisfy the following local parent-child invariants: 
\begin{itemize}
  \item Each \emph{inner} sequence $\sigma_0, \cdots, \sigma_l$ in a node $s$ indicates 
    that $\sem{s}(\sigma_0) = \sigma_l$.
    If $s$ is not a loop, then $l = 1$; if $s$ is a loop, then the sequence will contain 
    the intermediate states for checking the semantics of the loop.
  \item The length of the \emph{outer} sequence of a child is identical to the number 
    of transitions in its parent.
\end{itemize}
Based on these invariants, in the first-order representation of $\sem{\cdot}$, 
a parent node $s$ checks if the $i$-th transition $\sigma_{i - 1} \rightarrow \sigma_i$ 
is correct by checking that the $i$-th subsequence in the child node starts with 
$\sigma_{i - 1}$ and ends with $\sigma_i$.
Recursively performing this check then allows us to check the semantics of 
loops in a bottom-up fashion as in \S\ref{SubSe:semantics},
thereby extending the construction of $\sem{\cdot}$ towards loops as well---and 
completing the first-order construction of \eqref{synth-def}.


\section{Program Synthesis is $\Sigma_{3}^{0}$-Complete}
\label{Se:completeness}


Having established in \S\ref{Se:fo}
that program synthesis can 
be expressed as a first-order formula, we now answer the 
main question of our paper: 
how exactly \emph{hard} is program synthesis?

We start by considering program synthesis as a \emph{problem}, as 
defined in Definition~\ref{def:problem}.
In this section, we will fix the grammar component of a synthesis problem 
to $\gtgt$ for simplicity; this restriction will have no effect on the 
proofs in this section.
Consider \eqref{synth-output-replaced} again, where this time 
we further compress the syntax tree $(a_p, b_p)$ as $\ptree$, and the 
value tree $(a_v, b_v)$ as $\vtree$ for simplicity:
\begin{equation}
  \sy \defeq \exists \ptree. \forall \sigma. \exists \vtree.
    \sem{\cdot}(\ptree)(\sigma)(\vtree) \wedge \phi(\sigma, \ptree, \vtree)
  \tag{\ref{Eq:synth-output-replaced}}
\end{equation}
Because the only free variable in \eqref{synth-output-replaced} is 
$\phi$, the set of solutions to \eqref{synth-output-replaced}
is exactly the set 
of $\phi$ for which $\sy$ is \emph{realizable} (i.e., has a solution).
This set captures exactly the definition of program synthesis as 
a problem:
a set of integers that encode specifications 
that have a solution to them, 
much like how the Halting problem as a problem denotes the set of 
integers that encode Turing machine-input pairs that halt.
Phrased as a decision problem, in a form likely more familiar to 
readers well-versed in synthesis, 
we obtain the question asked at the start of this paper:

\begin{center}
\emph{
"How hard is it to determine whether
  a given synthesis problem is realizable or not?"}
\end{center}

Our construction of \eqref{synth-output-replaced} now gives us 
a good opportunity to study this problem, with respect to 
the arithmetical hierarchy introduced in \S\ref{Se:prelim}.

To begin, consider \eqref{synth-output-replaced}, which is 
headed by three alternating quantifiers, followed by the 
formula body.
In the construction of $\sem{\cdot}$ in \S\ref{Se:fo}, we 
only introduced bounded universal quantifiers to iterate 
over the value tree, without introducing any unbounded quantifiers.
In addition, as discussed in \S\ref{Se:prelim} and \S\ref{Se:fo}, 
we will assume for the time being that $\phi$ is a primitive 
recursive formula.
This makes the number of unbounded quantifiers in \eqref{synth-output-replaced} 
to at most three, 
which in turn places program synthesis as $\sic{3}$ in the arithmetical hierarchy.

%

\begin{theorem}[Program Synthesis is in $\sic{3}$]
  \label{thm:sic-3}
  Let $\syset$ be the set of realizable 
  synthesis problems.
  Then $\syset$ is in $\sic{3}$.
\end{theorem}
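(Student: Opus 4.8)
The plan is to show that equation \eqref{Eq:synth-output-replaced} is already syntactically a $\sic{3}$ formula, by carefully accounting for the quantifier structure established in \S\ref{Se:fo}. Recall that the formula has the shape
\[
  \sy \defeq \exists \ptree. \forall \sigma. \exists \vtree.\;
    \sem{\cdot}(\ptree)(\sigma)(\vtree) \wedge \phi(\sigma, \ptree, \vtree),
\]
where $\ptree, \vtree$ are each compressed pairs of integers via Lemma~\ref{lem:beta}. The essential observation is that these are genuine \emph{first-order} existentials over natural numbers (the $\beta$-function encoding guarantees this), and that the three leading quantifiers alternate in the pattern $\exists\forall\exists$. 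By Definition~\ref{def:ah}, a formula of this shape lands in $\sic{3}$ \emph{provided} the matrix $\sem{\cdot}(\ptree)(\sigma)(\vtree) \wedge \phi(\sigma, \ptree, \vtree)$ introduces no further unbounded quantifier alternations that would push the classification higher.

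\textbf{Key steps.} First I would argue that the grammar/syntax check together with $\sem{\cdot}$, as constructed in \eqref{Eq:sem-def}, is $\sic{0} = \pic{0}$: the construction introduced only \emph{bounded} universal quantifiers (ranging over node indices $0 \leq i \leq \h$ of the complete binary tree), and all per-node operations ($\semcheckl$, $\leafcheckl$, decoding via $\beta$, the local parent-child checks for sequential composition and loops from \S\ref{SubSe:loops}) are primitive recursive. Hence $\sem{\cdot}$ contributes nothing to the quantifier count beyond bounded ones, which collapse into $\sic{0}$. Second, by the standing assumption stated in \S\ref{Se:fo} and reiterated at the start of this section, $\phi$ is a primitive recursive specification and thus also $\sic{0} = \pic{0}$. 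Third, the conjunction of two $\sic{0}$ formulae is again $\sic{0}$, so the entire matrix is $\sic{0}$. Fourth, I would prepend the three quantifiers: starting from a $\sic{0} = \pic{0}$ matrix, $\exists \vtree$ yields $\sic{1}$; $\forall \sigma$ applied to a $\sic{1}$ formula yields $\pic{2}$; and $\exists \ptree$ applied to a $\pic{2}$ formula yields $\sic{3}$, exactly as dictated by Definition~\ref{def:ah}. One technical point to handle cleanly is that $\ptree$ and $\vtree$ each abbreviate a \emph{pair} of integers; since a block of like quantifiers $\exists a_v \exists b_v$ counts as a single alternation, this does not increase the level, and I would note this explicitly to avoid the appearance of five quantifiers rather than three.

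\textbf{The main obstacle} I expect is not the quantifier-counting bookkeeping itself but rigorously justifying that $\sem{\cdot}$ truly remains within $\sic{0}$ once loops are handled via the $2$-nested sequences of \S\ref{SubSe:loops}. The subtlety is that the value tree must encode sequences whose lengths (and the number of loop iterations) are \emph{a priori} unbounded, which naively suggests a hidden existential quantifier over sequence length. The resolution is that the entire value tree is already quantified existentially as $\vtree$ at the top level (the $\exists \vtree$), so that all such unbounded information is packaged into the single $\beta$-encoded pair $(a_v, b_v)$; the matrix then only needs to \emph{verify} a guessed tree using bounded quantification over its node indices. I would make this ``guess-and-check'' separation explicit, emphasizing that guessing lives in the leading $\exists \vtree$ while checking lives in the $\sic{0}$ matrix, mirroring the guess-and-check structure of Lemma~\ref{lem:while-sem} already highlighted in \S\ref{Se:prelim}. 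With that separation in place, the classification follows mechanically from Definition~\ref{def:ah}.
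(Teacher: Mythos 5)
Your proposal is correct and follows essentially the same route as the paper, which likewise establishes the result by observing that \eqref{Eq:synth-output-replaced} is headed by three alternating unbounded quantifiers ($\exists\forall\exists$) while the matrix --- $\sem{\cdot}$ built from bounded quantifiers only, conjoined with a primitive recursive $\phi$ --- stays in $\sic{0}$. Your additional remarks on collapsing blocks of like quantifiers and on the guess-and-check separation for loops are sound elaborations of the same argument rather than a different approach.
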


Theorem~\ref{thm:sic-3} provides us with an upper bound on the hardness of synthesis problems, 
but does not provide a \emph{lower} bound on hardness
(i.e., with only Theorem~\ref{thm:sic-3}, it may very well be that 
there exists an alternative first-order characterization of synthesis using a fewer number of quantifiers).
In Theorem~\ref{thm:sic-3-hard}, we prove that the lower bound of 
program synthesis is also $\sic{3}$, by proving that 
synthesis is $\sic{3}$\emph{-hard} 
(i.e., any problem in $\sic{3}$ may be reduced to an instance of a synthesis problem).

\begin{theorem}[Program Synthesis is $\sic{3}$-Hard]
  \label{thm:sic-3-hard}
  Let $\syset$ be the set of realizable 
  synthesis problems.
  Then $\syset$ is $\sic{3}$-hard.
\end{theorem}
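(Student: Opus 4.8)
The plan is to prove $\sic{3}$-hardness by exhibiting a computable many-one reduction $\cof \le_m \syset$, where $\cof$ is the set of (indices of) machines that halt on all but finitely many inputs. Since $\cof$ is $\sic{3}$-complete and hence $\sic{3}$-hard, such a reduction transfers hardness to $\syset$ by transitivity, and together with Theorem~\ref{thm:sic-3} yields $\sic{3}$-completeness. The reduction is driven by the exact match between the quantifier shape of $\cof$ and that of the first-order synthesis query \eqref{Eq:synth-output-replaced}. Writing $e \in \cof \iff \exists n.\ \forall x > n.\ \exists t.\ T(e, x, t)$, where $T(e, x, t)$ is Kleene's primitive-recursive predicate ``$M_e$ halts on $x$ within $t$ steps,'' the three alternations line up with $\exists \ptree.\ \forall \sigma.\ \exists \vtree$: the outer existential over the candidate supplies the finite threshold $n$, the universal over $\sigma$ ranges over inputs $x$, and the inner $\exists \vtree$ (which holds precisely when the candidate halts on $\sigma$, so that a finite value tree exists) supplies the step-count witness $t$.

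Keeping the grammar fixed to $\gtgt$ as in this section, I would let the reduction act on the only free component, the specification, mapping $e$ to a specification $\phi_e$. On arguments $(\sigma, \ptree, \vtree)$, the predicate $\phi_e$ decodes (via Lemma~\ref{lem:beta}) the input $x$ from $\sigma$, the output value $t = \sem{\cdot}(\ptree)(\sigma)$ from $\vtree$, and a threshold $n$ recoverable from the syntax of the candidate --- concretely, the number of nodes of $\ptree$ --- and returns $\Etrue$ exactly when $x \le n \vee T(e, x, t)$. Every ingredient is primitive recursive, so $\phi_e$ lies in $\sic{0} = \pic{0}$, respecting the standing assumption on specifications; the map $e \mapsto \phi_e$ is plainly total recursive, so it is a legitimate many-one reduction.

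Next I would argue the two directions of $e \in \cof \iff \phi_e \in \syset$. If $e \in \cof$, fix a bound $n^*$ on the finitely many inputs where $M_e$ diverges, and build a program that halts immediately on inputs $\le n^*$ and otherwise simulates $M_e$ and outputs its halting step count, padded with dummy assignments so that its recovered threshold $n_f$ satisfies $n_f \ge n^*$. This program halts on every input (below $n^*$ by the guard, above it because $M_e$ halts there), so the inner $\exists \vtree$ is always satisfiable, and $\phi_e$ holds everywhere: trivially for $x \le n_f$, and via $T(e, x, \sem{\cdot}(\ptree)(\sigma))$ for $x > n_f \ge n^*$. Hence $\phi_e \in \syset$. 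Conversely, if $\phi_e$ is realized by some candidate with recovered threshold $n_f$, then for every $x > n_f$ the satisfied conjunct $\phi_e$ forces $T(e, x, \sem{\cdot}(\ptree)(\sigma))$, so $M_e$ halts on $x$; thus $M_e$ diverges only on the finite set $\{0, \ldots, n_f\}$ and $e \in \cof$.

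I expect the main obstacle to lie not in the bookkeeping of decoding $x$, $t$, and $n$ through the $\beta$-function, but in faithfully encoding the \emph{finite} exception set through the single outer existential while preventing a candidate from excusing infinitely many inputs. The decisive design choices are (i) recovering the threshold $n$ from the candidate itself, so it is necessarily finite for each fixed candidate yet can be driven arbitrarily high by padding, making $\exists \ptree$ genuinely realize $\exists n$; and (ii) keeping $\phi_e$ primitive recursive, which forbids naming ``$M_e$ halts on $x$'' outright and instead forces the halting fact to be certified by the step count the candidate must emit, routing the $\sic{1}$ content of halting through the inner existential $\exists \vtree$ rather than through $\phi_e$. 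Checking that these mechanisms compose correctly --- in particular that the immediate-halt guard and the padding jointly realize every sufficiently large threshold --- is the crux of the argument.
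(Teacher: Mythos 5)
Your reduction is correct and shares the paper's overall skeleton --- reduce from $\cof$, align $\exists n.\ \forall x.\ \exists t$ with $\exists \ptree.\ \forall \sigma.\ \exists \vtree$, recover the finite threshold from the candidate itself, and route the $\sic{1}$ content of ``$M_e$ halts on $x$'' through the inner existential over the value tree --- but the two key mechanisms differ from the paper's in ways worth noting. First, the paper realizes the threshold by forcing the candidate to be $\fpair(f_q, \Eif{y > f_q}{f_z})$ where $f_q$ is a \emph{quine} whose output serves as the bound; this requires the Kleene recursion theorem plus an auxiliary lemma that $L(\gtgt)$ contains infinitely many quines (so a quine exceeding any given $n^*$ exists). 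You instead read the threshold off as the \emph{size} of $\ptree$, which is directly available to $\phi$ since Definition~\ref{def:synth-def} lets the specification reference the syntax tree, and you drive it arbitrarily high by padding with no-ops; this sidesteps the recursion theorem entirely and is arguably more elementary. Second, the paper certifies halting of $g$ on $y$ by embedding a second value tree $\decode_2(\vtree)$ for $g$ itself inside the candidate's value tree, whereas you have the candidate emit a step count $t$ and check the primitive recursive Kleene predicate $T(e,x,t)$ inside $\phi_e$; both keep $\phi_e$ in $\sic{0}$, but yours keeps the semantics function $\sem{\cdot}$ applied only to the candidate, at the cost of requiring the candidate to actually simulate $M_e$ and count steps (which is fine in a Turing-complete language, and your guard on $x \le n^*$ correctly ensures the candidate still halts everywhere so the inner existential is always witnessed). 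Both arguments are sound; the paper's buys a construction that generalizes to the discussion in \S\ref{Se:discussion} about specifications referencing $f$ only through its output, while yours buys a shorter proof with fewer moving parts.
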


\begin{proof}
  To prove that program synthesis is $\sic{3}$-hard, we will rely on the fact that 
  $\cof$, the set of functions which halt on a co-finite set of inputs 
  (i.e., the set of functions which do not terminate only for a finite set of inputs) 
  is $\sic{3}$-complete (and thus also $\sic{3}$-hard), and 
  reduce the decision problem for $\cof$ into a decision problem for $\syset$.

  \begin{definition}[The Set COF]
    \label{def:COF}
    Let $\hset(g)$ denote the set of inputs that halt for a function
    $g \in \ltgt$, and $\overline{A}$ denote the complement of a set $A$.
    $\cof$, the set of all co-finite functions, is defined as following:
    \[
      \cof \defeq \set{g \mid \overline{\hset(g)} \text{ is finite}}
    \]
  \end{definition}
  Definition~\ref{def:COF} is typically stated for Turing machines as opposed to 
  terms from a language.
  In this paper, we take 
  advantage of the fact that $\ltgt$ is Turing-complete in order to give a 
  alternative language-based definition;
  this definition will be far more useful in the actual reduction proof, as the 
  reduction may now consider terms only terms from $\ltgt$ for both sides of the 
  reduction (as opposed to considering Turing machines for $\cof$  and 
  terms from $\ltgt$ for $\syset$ separately).

  As stated, we make use of the fact that $\cof$ is $\sic{3}$-complete.

  \begin{lemma}[\citet{soare}]
    $\cof$ is $\sic{3}$-complete.
  \end{lemma}

  We wish to show that any algorithm 
  capable of solving the decision problem for $\syset$ 
  is also capable of solving the decision problem for $\cof$.
  We start by characterizing the concept of \emph{halting} itself
  as a formula.

  \begin{lemma}
    \label{lem:halting-pa}
    Let $f$ be a term in $\ltgt$ and $x$ be an input to $f$.
    Then the following holds:
    \begin{equation*}
      \begin{split}
        f \textnormal{ halts on } x & 
          \leftrightarrow \exists y. \semantics{f}(x) = y \\
        f \textnormal{ does not halt on } x & 
          \leftrightarrow \forall y. \semantics{f}(x) \neq y
      \end{split}
    \end{equation*}
  \end{lemma}

  The proof of Lemma~\ref{lem:halting-pa} is simple: $f$ halts on an 
  input $x$ if and only if there exists a finite sequence of 
  intermediary states $\sigma_0, \cdots, \sigma_k$ that $f$ iterates 
  through in order to reach the final state.
  Assuming that such a sequence exists, $\sigma_k$ is a witness to the 
  validity of $\exists y. \semantics{f}(x) = y$.
  If $f$ does not halt on $x$, then there does not exist any such sequence and thus 
  $\forall y. \semantics{f}(x) \neq y$.
  A similar construction of halting as a formula may also be found in~\citet{soare}.

  Now, apply Lemma~\ref{lem:halting-pa} to Definition~\ref{def:COF}, to 
  obtain the following equivalent definition of $\cof$:
  \begin{equation}
    \label{Eq:cof-pa-def}
    \cof \equiv \set{g \mid \exists x. \forall y. y \leq x \vee \exists z. \sem{g}(y) = z}
  \end{equation}
  In \eqref{cof-pa-def}, $x$ serves as the limit on the size of inputs on which 
  $g$ may not halt, which is guaranteed to exist, as by definition the set of inputs on which 
  $g$ does not halt is finite.
  The latter condition $\exists z. \sem{g}(y) = z$ states that if $y$ is bigger than $x$ 
  (the limit), then $g$ must terminate on $y$ (producing $z$ as the output).

  As $g$ is a term from $\ltgt$, we wish to rewrite \eqref{cof-pa-def} using $\sem{\cdot}$, 
  as to better further manipulate this problem into an instance of a program 
  synthesis problem.
  We will thus replace $g$ with $\gtree$, and the `output' of $g$ with a value tree 
  $\vztree$, and covert to prenex normal form in \eqref{cof-pa-def} in order to obtain \eqref{cof-sem-def}:
  \begin{equation}
    \label{Eq:cof-sem-def}
    \cof \equiv \set{\gtree \mid \exists x. \forall y. \exists \vztree. y \leq x \vee \sem{\cdot}(\gtree)(y)(\vztree)}
  \end{equation}
  Observe how \eqref{cof-sem-def} now resembles a synthesis problem in structure: 
  $x$ as the function to be synthesized, $y$ as the input, and $\vztree$ as the output.
  $\gtree$ and the predicate surrounding it should be treated as the specification, 
  not the function to be synthesized, in order to match cofinite $\gtree$ with 
  realizable specifications.
  Based on this intuition, we construct the synthesis query in 
  \eqref{cof-synth-def}:
  \begin{equation}
    \label{Eq:cof-synth-def}
    \begin{split}
      \exists \ftree. \forall y. \exists \vtree. 
        & \sem{\cdot}(\ftree)(y)(\vtree) \wedge \\
        & \semout(\vtree) = (\decode_1(\ftree), \_) \wedge \\
        & y \leq \decode_1(\ftree) \vee \sem{\cdot}(\gtree)(y)(\decode_2(\vtree))
    \end{split}                              
  \end{equation}
  In \eqref{cof-synth-def}, we introduce some new notation to simplify the presentation.
  $\semout(\vztree)$ is a predicate that unpacks the root of the 
  value tree encoded by $\vztree$, i.e., the `output value' of $f$ on $x$, while $\_$ denotes 
  an unconstrained value.
  $\decode_1$ and $\decode_2$ are predicates that `decode' an input tree according to the 
  following rules:  
  \begin{itemize}
    \item Assume that $\ftree$ represents a composed pair of programs $f_q$ and $f_z$ 
      as following: $\fpair({f_q}, \Eif{y > f_q}{f_z})$.
    \item $\fpair(f, g)$ intuitively encodes a function $h$ such that 
      $h(y) = (f(y), g(y))$ for all input $y$.\footnote{
        $(f(y), g(y))$ denotes the result of pairing $f(y)$ and $g(y)$ through a suitable 
        pairing function, e.g., the Cantor pairing function.
      }
      We observe that it is always possible to construct $\fpair$ in $\gtgt$ (which is Turing-complete).
    \item $\decode_1(\ftree)$ then returns the syntax tree $\fqtree$ for $f_q$.
    \item $\decode_2(\vtree)$ then returns the value tree $\vztree$ for $f_z$.
  \end{itemize}

  With respect to this decoding, observe the second line of \eqref{cof-synth-def}: we wish the target 
  function $f$ to essentially be composed of a \emph{quine} $f_q$, that returns itself when executed, 
  and some other arbitrary function $f_r$.
  The intuition is that the quine part of $f$ will serve as $x$ from \eqref{cof-sem-def}, i.e., the 
  upper limit for the nonterminating inputs of $g$.
  Under this intuition, we wish to prove that 
  for arbitrary $\gtree$, the synthesis problem in \eqref{cof-synth-def} is 
  realizable iff $\gtree \in \cof$ as defined in \eqref{cof-sem-def}.

  If $\gtree \in \cof$, by \eqref{cof-sem-def}, there must exist $x$ that acts as the upper limit 
  of nonterminating inputs on $g$.
  We construct a solution $\ftree$, which is the syntax tree for a function $f$, for \eqref{cof-synth-def} as following:
  \begin{itemize}
    \item Let $f_q$ be some quine such that $f_q > x$.
    \item $f$ is the program $\fpair({f_q}, \Eif{y > f_q}{g(y)})$.
  \end{itemize}
  Such a program always exists in $L(\gtgt)$ because the set of quines is infinite, and thus we can 
  always find a quine $f_q > x$.

  \begin{lemma}
    The set of quines in $L(\gtgt)$ is infinite.
  \end{lemma}
  \begin{proof}
    Start with the fact that by the Kleene recursion theorem, we are guaranteed the existence 
    of at least one quine in $L(\gtgt)$; call this quine $q$.

    Observe that $L(\gtgt) \setminus \set{q}$ is also a Turing-complete language, because $q$ is 
    a constant and there are infinitely many programs in $L(\gtgt)$ that are 
    behaviorally equivalent to $q$ (i.e., accept the same set of inputs).
    Thus it follows that $L(\gtgt) \setminus \set{q}$ also contains a quine by the Kleene recursion theorem, 
    and further, that the set of quines in $L(\gtgt)$ is infinite.
  \end{proof}

  Because we can always find an appropriate $f$ (i.e., $\ftree$) for any $\gtree \in \cof$, 
  it follows that \eqref{cof-synth-def} is realizable if $\gtree \in \cof$.

  Conversely, suppose that \eqref{cof-synth-def} is realizable, witnessed by the function $f$' 
  (with the syntax tree $\ftree'$).
  Then it follows that $\decode_1(\ftree')$ is a witness of the cofiniteness of $g$, as 
  for any $y > \decode_1(\ftree')$, $g$ is guaranteed to terminate as witnessed by the value tree
  $\decode_2(\vtree)$.
  Note that in this case, the structure of $f'$ does not matter---it suffices that there exists 
  an $f'$ that satisfies \eqref{cof-synth-def}.

  Thus it follows that \eqref{cof-sem-def} and \eqref{cof-synth-def} have the same set of solutions, 
  meaning that the decision problem for \eqref{cof-sem-def} (i.e., membership in $\cof$) can be 
  reduced into an instance of the decision problem for program synthesis.
  Thus program synthesis is $\sic{3}$-hard.
\end{proof}

By Theorems~\ref{thm:sic-3} and \ref{thm:sic-3-hard}, it follows that program synthesis 
is $\sic{3}$-complete, finalizing the classification of synthesis within the 
arithmetical hierarchy.

\begin{theorem}[Program Synthesis is $\sic{3}$-Complete]
  \label{thm:sic-3-complete}
  Let $\syset$ be the set of realizable 
  synthesis problems.
  Then $\syset$ is $\sic{3}$-complete.
\end{theorem}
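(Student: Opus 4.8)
The plan is to observe that $\sic{3}$-completeness is, by definition, the conjunction of two properties already established in the immediately preceding theorems: membership in $\sic{3}$ and $\sic{3}$-hardness. Thus the proof is a direct combination and requires no new machinery. First I would recall that a problem $P$ is $\sic{3}$-complete exactly when (i) $P \in \sic{3}$ and (ii) every problem in $\sic{3}$ reduces to $P$ (i.e., $P$ is $\sic{3}$-hard).

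For the upper bound (i), I would invoke Theorem~\ref{thm:sic-3}, which established via the first-order construction of \eqref{synth-output-replaced} that $\syset$ is headed by at most three alternating unbounded quantifiers over a primitive-recursive ($\sic{0} = \pic{0}$) body, placing $\syset \in \sic{3}$. For the lower bound (ii), I would invoke Theorem~\ref{thm:sic-3-hard}, which reduced the $\sic{3}$-complete problem $\cof$ into the realizability question for $\syset$, witnessing $\sic{3}$-hardness.

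Combining these two facts yields that $\syset$ satisfies both defining conditions of $\sic{3}$-completeness, completing the classification. The ``main obstacle'' is in fact vacuous here: all of the genuine difficulty was front-loaded into the two component theorems—the first-order encoding of $\sem{\cdot}$ underlying Theorem~\ref{thm:sic-3}, and the quine-based reduction from $\cof$ underlying Theorem~\ref{thm:sic-3-hard}. Given both, the present statement follows immediately by definition, and I would expect the formal proof to be a single sentence citing Theorems~\ref{thm:sic-3} and~\ref{thm:sic-3-hard}.
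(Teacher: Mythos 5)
Your proposal is correct and matches the paper exactly: the paper derives Theorem~\ref{thm:sic-3-complete} in a single sentence by combining the upper bound from Theorem~\ref{thm:sic-3} with the hardness result from Theorem~\ref{thm:sic-3-hard}, just as you describe. No further review is needed.
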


We observe that, because we proved that $\syset$ is $\sic{3}$-complete for the fixed 
grammar $\gtgt$, while \S\ref{Se:fo} shows that synthesis is in $\sic{3}$ for any grammar, 
it follows that synthesis is still $\sic{3}$-complete even when the grammar is not 
fixed to $\gtgt$.
In other words, the ability to choose a grammar has \emph{no effect} on the 
computational hardness of program synthesis.

That $\syset$ is $\sic{3}$-complete means that, having access to an oracle for 
program verification (which is $\pic{2}$-complete for primitive recursive 
specifications, as discussed in \S\ref{Se:prelim}) makes $\syset$ 
recursively enumerable.
Size-based enumeration algorithms are a good witness of this fact: 
such algorithms will always terminate for realizable problems, given 
a verification oracle that can check whether a specific program satisfies 
the desired property.


However, the $\sic{3}$-completeness of $\syset$ also means that one 
\emph{cannot do better} than the enumerative algorithm in terms of computability.
In particular, it means that there \emph{cannot} 
exist an algorithm capable of \emph{rejecting unrealizable} synthesis problems 
within a finite number of steps, even with access to an oracle for program verification.
This, in turn, implies the non-existence 
of a complete algorithm capable of performing both synthesis and unrealizability 
at the same time.

\section{The Computability of Variants of Synthesis Problems}
\label{Se:variants}

Having established the computational hardness of program synthesis 
in general through Theorem~\ref{thm:sic-3-complete}, in this section 
we consider the computational 
hardness of many \emph{variants} of program synthesis.
By variants, we refer to both synthesis problems that have 
restrictions posed on them (often to make solving synthesis problems 
practically easier), and 
also those where the definition has been relaxed 
(in order to pose queries that are more complex).

%

\myparvs{Synthesis on Finite Examples}
One popular approach to solving synthesis problems in existing work
is the idea of 
\emph{programming-by-example}~\cite{sketch, flashfill, semgus, duet}, 
where one attempts to solve simplified versions of synthesis problems 
where 
the input domain is limited to contain only a \emph{finite number of examples}.

The limitation of the input space to a finite set greatly
reduces the complexity of program synthesis:
program synthesis is only $\sic{1}$-complete when performed 
over a finite input space.

\begin{corollary}
  \label{cor:synth-fin}
  Let $\sysetfin$ denote the set of realizable synthesis problems  
  defined over a finite input domain $D$.
  Then $\sysetfin$ is $\sic{1}$-complete.
\end{corollary}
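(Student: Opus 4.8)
The plan is to establish both halves of the completeness claim: membership in $\sic{1}$ and $\sic{1}$-hardness. For the upper bound, I would begin from the first-order form \eqref{Eq:synth-output-replaced} and exploit the finiteness of $D$. Restricting the domain replaces the middle quantifier $\forall \sigma$ with the bounded quantifier $\forall \sigma \in D$, which---because $D$ is finite---is equivalent to a finite conjunction indexed by the elements of $D$. Each conjunct carries its own existential witness $\exists \vtree$ for a value tree, so after renaming these witnesses apart I can hoist all $|D|$ of them, together with the leading $\exists \ptree$, into a single block of existential quantifiers. The matrix that remains is the conjunction of the semantics check $\sem{\cdot}$ and the specification $\phi$; by the construction of \S\ref{Se:fo} the former uses only bounded quantifiers, and $\phi$ is primitive recursive by assumption, so the matrix is $\sic{0}$. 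A finite block of existentials over a $\sic{0}$ matrix is exactly $\sic{1}$, giving $\sysetfin \in \sic{1}$.

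For hardness, I would reduce the Halting problem, which is $\sic{1}$-complete (Theorem~\ref{thm:ah}), to $\sysetfin$, again leveraging that $\ltgt$ is Turing-complete. Given a halting instance---a term $g$ (with syntax tree $\gtree$) and an input $x$---Lemma~\ref{lem:halting-pa} tells us that $g$ halts on $x$ iff $\exists \vtree.\ \sem{\cdot}(\gtree)(x)(\vtree)$. I would map this instance to the synthesis problem over the grammar $\gtgt$, the singleton domain $D = \set{x_0}$, and the specification $\phi(\sigma, f, \sem{f}(\sigma)) \defeq \sem{\cdot}(\gtree)(x)(\sem{f}(\sigma))$ that ignores $\sigma$ and $f$ and simply demands that the output value of the synthesized program be a value tree witnessing termination of $g$ on $x$. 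Since $g$ and $x$ are fixed constants and $\sem{\cdot}$ is $\sic{0}$, this $\phi$ is a legitimate primitive recursive specification. By Definition~\ref{def:synth-def} the resulting query is $\exists f \in \gtgt.\ \sem{\cdot}(\gtree)(x)(\sem{f}(x_0))$, and because $\gtgt$ contains a constant-output program for every natural number, $\sem{f}(x_0)$ ranges over all candidate value-tree encodings as $f$ ranges over the grammar. Hence the query is realizable iff some value tree witnesses halting, i.e., iff $g$ halts on $x$; the map is clearly computable, so $\sysetfin$ is $\sic{1}$-hard. Combining the two bounds yields $\sic{1}$-completeness.

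The membership direction is essentially bookkeeping once the finite conjunction is observed, so I expect the main obstacle to lie in the hardness reduction, specifically in the step where the outer existential $\exists f$ of the synthesis query must faithfully mirror the existential $\exists \vtree$ over halting witnesses. The point requiring care is that $\sem{f}(x_0)$ genuinely exhausts the space of value-tree encodings: I would argue this by noting that value trees are encoded as integers with a total $\beta$-decoding, so every integer output corresponds to some (possibly malformed) value tree, that malformed candidates are harmlessly rejected by $\sem{\cdot}$, and that the integer expressions of $\gtgt$ can produce every natural number as a constant. This is exactly where Turing-completeness (or at least the arithmetic expressiveness of $\gtgt$) is indispensable, mirroring its role in the $\sic{3}$-hardness argument.
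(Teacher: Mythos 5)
Your proposal is correct, and the membership half is exactly the paper's argument: replace $\forall \sigma \in D$ by a finite conjunction as in \eqref{Eq:finite-synth}, hoist the per-input existentials for the value trees into one block in front of a $\sic{0}$ matrix. The hardness half, however, takes a genuinely different route. The paper's (very terse) argument keeps the Halting instance's program in the role of the \emph{synthesized} object: since \eqref{Eq:finite-synth} only holds when a value tree exists for each $\sigma_i$, realizability forces $\ptree$ to halt on the finite domain, and deciding halting on a given finite input set is already $\sic{1}$-hard. Your reduction instead moves the Halting instance $(\gtree, x)$ into the \emph{specification} and uses the synthesized program merely as a channel for guessing a halting witness: $f$ outputs an integer, $\phi$ decodes it as a candidate value tree for $\sem{\cdot}(\gtree)(x)(\cdot)$, and the surjectivity of constant-output programs in $\gtgt$ makes $\exists f$ simulate $\exists \vtree$. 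Both are valid. Yours has the advantage of not requiring $\phi$ to constrain the syntax of $f$ (so it survives the restriction of \eqref{Eq:synth-intro} where $\phi$ cannot mention $f$), and it is stylistically closer to the paper's own $\sic{2}$-hardness reduction for loop-free synthesis, where a fixed $g$ sits in the specification and the synthesized term is just a witness value. The paper's version is shorter and needs no argument about which integers $f$ can emit. The one point you should tighten is the step ``$\sem{f}(x_0)$ ranges over all value-tree encodings'': the output of a statement is a \emph{state}, not a bare integer, so you should either let $\phi$ read the witness pair $(a_v,b_v)$ off two designated variables of the output state, or argue that the state encoding is surjective enough; you flag this issue yourself and the fix is routine, so I do not regard it as a gap.
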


The fact that $\sysetfin \in \sic{1}$ follows from the fact 
for a finite set of inputs $\set{\sigma_0, \cdots, \sigma_k}$,
one can simply replace the universal quantifier over the input 
$\sigma$ in \eqref{synth-output-replaced} as a conjunction as in \eqref{finite-synth}.
{\small
\begin{equation}
  \label{Eq:finite-synth}
  \begin{split}
    \sy_{\mathsf{fin}} \equiv \exists \ptree. 
      \exists \vtree_0, \cdots, \vtree_k.
      & \sem{\cdot}(\ptree)(\sigma_0)(\vtree_0) \wedge \phi(\sigma_0, \ptree, \vtree_0) \\
			& \cdots \\
      & \sem{\cdot}(\ptree)(\sigma_k)(\vtree_k) \wedge \phi(\sigma_k, \ptree, \vtree_k)
  \end{split}
\end{equation}
}
The fact that $\sysetfin$ is $\sic{1}$-hard follows from the fact that 
the set $\set{\sigma_0, \cdots, \sigma_k}$ can be any finite subset of $\mathbb{N}$.
There is no algorithm that may solve the Halting problem for arbitrary 
subsets of $\mathbb{N}$ (the problem is still $\sic{1}$-complete), 
and as \eqref{finite-synth} asks that 
$\ptree$ halts on $\set{\sigma_0, \cdots, \sigma_k}$, 
$\sysetfin$ is also $\sic{1}$-hard (and thus $\sic{1}$-complete).


One kind of approach that relies on programming-by-example
are those that reduce synthesis into another task, such as 
synthesis reduced to program verification~\cite{nope} or 
Constrained Horn Clause solving~\cite{semgus}.
These approaches are sound precisely because limiting the input space 
to a finite set reduces the hardness of program synthesis
to $\sic{1}$; in other words, such a reduction is not possible for 
general synthesis problems (which are $\sic{3}$-complete).


\myparvs{Inductive Synthesis and Generalizations}
One approach closely related to programming-by-example is 
\emph{inductive synthesis}~\cite{sketch}, where one first attempts 
to solve a simplified synthesis problem over a finite number of inputs, 
then \emph{generalize} the synthesized solution to the entire (possibly infinite) 
input space.
Inductive synthesis 
has been proven to be very effective in practice; 
Sketch~\cite{sketch}, 
Neo~\cite{conflict}, 
Duet~\cite{duet},
and Messy~\cite{semgus}, are just a few examples amongst 
the many solvers that rely on inductive synthesis in some way.

The $\sic{1}$-completeness of $\sysetfin$ allows us to state a 
corollary on the hardness of \emph{generalization} in inductive synthesis 
(i.e., the task 
of extending a function that is correct on a finite set of inputs 
to be correct on the entire, possibly infinite, input space).


\begin{corollary}
  \label{cor:generalizability}
  Let $\sy$ denote some synthesis problem over an infinite domain $D$, and 
  $\sy_d$ denote the same synthesis problem where the domain is a finite subset $d \subset D$.
  Then there \emph{cannot} exist a computable generalization algorithm that takes an arbitrary solution 
  $f_d$ for $\sy_d$ and converts it into a solution $f$ for $\sy$, 
  for arbitrary $\sy$ and $\sy_d$: generalization is \emph{uncomputable} ($\sic{2}$-complete).
\end{corollary}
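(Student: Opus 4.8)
The plan is to establish the two halves of the claim separately: first the upper bound (generalization is in $\sic{2}$), and then the matching lower bound ($\sic{2}$-hardness), which carries the real content. Uncomputability then follows immediately, since no $\sic{2}$-complete set is decidable.

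For the upper bound, I would phrase the generalization query for a given finite witness $f_d$ as the question of whether the behaviour recorded by $f_d$ extends to a correct solution over the full domain $D$, written in prenex form $\exists \ptree. \forall \sigma. \psi(\ptree, \sigma)$. The single leading existential guesses the generalized program and the inner universal ranges over all inputs $\sigma \in D$. The crucial bookkeeping step is that, unlike the full-synthesis query \eqref{synth-output-replaced}, the third (existential) quantifier over the value tree must \emph{not} reappear here as an unbounded quantifier: because we start from a witness that is already correct on $d$ and are checking a non-termination-style condition, Lemma~\ref{lem:halting-pa} renders the matrix as a $\pic{1}$ (universal, ``does not halt'') clause rather than a $\sic{1}$ (``halts'') clause. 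Thus the whole query has the shape $\exists\,\forall\,[\pic{1}]$, i.e. $\sic{2}$. I expect the delicate point to be exactly this collapse: arguing rigorously that the semantics quantifier that forced full synthesis up to $\sic{3}$ is genuinely absent once one generalizes a correct finite witness, rather than being silently hidden inside $\sem{\cdot}$.

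For the lower bound, the plan is to reuse the reduction template of Theorem~\ref{thm:sic-3-hard}, but one level down in the hierarchy: I would reduce $\fin$, the set of programs halting on only finitely many inputs, which is $\sic{2}$-complete, into the generalization problem, mirroring how $\cof$ was reduced into $\syset$. Given a target $g$, I build a generalization instance whose witness encodes, via a quine bound $f_q$, a threshold $x$, so that the instance is generalizable iff there is a threshold $x$ such that $g$ fails to halt on every $y > x$ --- which is precisely membership of $g$ in $\fin$. The one structural change from the $\cof$ construction is that the inner clause becomes the $\pic{1}$ non-termination predicate $\forall z.\ \sem{g}(y) \neq z$ (again by Lemma~\ref{lem:halting-pa}) in place of the $\sic{1}$ termination predicate used for $\cof$; this substitution is exactly what lowers the quantifier count from $\sic{3}$ to $\sic{2}$ and keeps the hardness reduction in lockstep with the upper bound.

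Combining the two bounds yields $\sic{2}$-completeness, and uncomputability of any generalization algorithm then follows: were generalization computable, composing finite synthesis (which is r.e., since $\sysetfin$ is $\sic{1}$-complete by Corollary~\ref{cor:synth-fin}) with the generalizer would place the generalization decision inside $\sic{1}$, contradicting $\sic{2}$-hardness; equivalently, it would collapse the $\sic{3}$-hard full-synthesis problem onto the $\sic{1}$ finite-synthesis problem. I would close by remarking that this pins CEGIS down as suboptimal, since a $\sic{2}$ problem is already r.e. relative to a mere Halting ($\sic{1}$) oracle, whereas CEGIS repeatedly invokes a strictly stronger verification ($\pic{2}$) oracle. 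The main obstacle throughout is the careful quantifier accounting in the upper bound --- making certain the generalization formulation genuinely lands at $\sic{2}$ and not at $\pic{2}$ or $\sic{3}$ --- since everything else follows the already-established $\cof/\syset$ machinery.
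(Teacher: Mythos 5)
The paper itself gives no explicit proof of this corollary---it is asserted on the strength of Corollary~\ref{cor:synth-fin} and the surrounding remarks that counterexample search is $\sic{2}$-complete---so there is no official argument to measure you against. Judged on its own terms, your lower bound is plausible: reducing $\fin$ with a threshold-valued witness and a $\pic{1}$ non-termination matrix is exactly the pattern the paper uses in the proof of Theorem~\ref{cor:synth-loop-free}, and it lands at $\sic{2}$ for the right reason. You do, however, still owe the bookkeeping that makes the reduction target a \emph{generalization} instance rather than a bare synthesis instance: the reduction must also exhibit a finite $d$ and a concrete solution $f_d$ of $\sy_d$ to hand to the purported generalizer, otherwise you have only shown hardness for a class of synthesis problems, not for the generalization step itself.

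The genuine gap is in your upper bound. You claim the query $\exists \ptree.\,\forall\sigma.\,\psi$ has a $\pic{1}$ matrix ``because we start from a witness that is already correct on $d$,'' but correctness of $f_d$ on the finite set $d$ says nothing about the infinitely many inputs in $D \setminus d$: under the paper's default total-correctness semantics, asserting that the generalized program is correct on each such $\sigma$ still requires asserting that it \emph{terminates} there, i.e., the existential over the value tree $\vtree$ from \eqref{synth-output-replaced} reappears, and the formula has shape $\exists\forall\exists$, which is $\sic{3}$, not $\sic{2}$. You flag this collapse as the delicate point but never discharge it, and as stated it does not happen. To genuinely land at $\sic{2}$ you must change what the decision problem quantifies over---for instance, take it to be the per-candidate check ``does the current candidate admit a counterexample,'' which is $\exists x.[\pic{1}]$ and honestly $\sic{2}$, and which is evidently what the paper's discussion of CEGIS and counterexample oracles has in mind. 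As written, your formulation of the upper bound does not support the stated $\sic{2}$ classification, so the completeness claim is not yet proved.
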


Generalization plays a key role in inductive synthesis, where most approaches rely on an 
algorithm known as 
counterexample-guided inductive synthesis (CEGIS)~\cite{sketch}.
In CEGIS, one first synthesizes a candidate function $f_d$
that is correct on a finite set of examples $d$, 
then attempts to find a \emph{counterexample}: some input $x \in D$ (the full input domain)
such that $f_d$ fails to satisfy the specification on $x$.
If one succeeds in finding such an $x$, then $x$ is added to the set of examples $d$ 
and the algorithm repeats.
If not, then $f_d$ is correct on all inputs and the synthesis problem has been solved.

To the best of our knowledge, the only result on the properties of generalization 
itself is that CEGIS is simply undecidable~\cite{nay}; 
Corollary~\ref{cor:generalizability} gives us a precise result on the hardness of 
generalization algorithms in general, 
which, as discussed in \S\ref{Se:intro}, 
remain surprisingly less-studied despite their practical relevance.

In particular, 
Corollary~\ref{cor:generalizability} 
tells us that CEGIS is \emph{suboptimal} as a generalization 
algorithm in terms of computational hardness, 
despite the fact that CEGIS is well-known to be very effective in practice.
This is due to the fact that searching for a counterexample $x$ is a 
$\sic{2}$-complete problem, 
 complement of safety verification as discussed in \S\ref{Se:prelim}.
Thus it follows that if one has access to an oracle for finding counterexamples, 
one should also be able to perform generalization (generalization becomes computable).

However, there are synthesis problems for which CEGIS will fail to compute a generalization, 
even with an oracle for finding counterexamples.
This is due to the fact that there exist synthesis problems
which are guaranteed to generate an infinite number of counterexamples 
(as illustrated in Example~\ref{ex:suboptimal}~\cite{nay}).

\begin{example}
  \label{ex:suboptimal}
  Let $\sy$ be a synthesis problem defined on the following set of programs $S$:
  \[
    S::= \Eassign{x}{E} \mid \Eif{E == y}{S} \mid \Eseq{S}{S} 
    \qquad E::= 0 \mid 1 \mid E + E
  \]
  Let the specification for $\sy$ be to synthesize a function $f$ that, for an 
  arbitrary input state $\sigma = [x \mapsto 0, y \mapsto a]$ for some $a$ and $b$, 
  $f(\sigma) = [x \mapsto a, y \mapsto a]$.

  $\sy$, when defined over a finite set of inputs $d$, will always have a solution $f_d$ 
  that chains as many $\mathsf{if}$-$\mathsf{then}$s as required.
  However, $f_d$ will always have a counterexample when generalizing the input set to 
  allow any $a \in \mathbb{N}$.
  Simply take the largest constant $C$
  that occurs in $f_d$: it is clear that $[x \mapsto 0, y \mapsto C + 1]$ is a counterexample, 
  as $x$ can never be assigned $C + 1$.

  Thus CEGIS will loop indefinitely on $\sy$, even with an oracle for finding counterexamples.
\end{example}

Of course, the existence of an algorithm that is computation-theoretically more optimal 
does not immediately mean that this algorithm will also perform better in practice.
We leave the question of finding a generalization algorithm that is 
both theoretically optimal and practically efficient as future work.

\myparvs{Synthesis for Loop-Free Languages}
Another variant of synthesis problems 
often considered are those defined over
\emph{loop-free} languages.
Loop-free languages are common in
synthesizers for specialized DSLs~\cite{flashfillplus, swizzle}; 
most background theories used with \sygus~\cite{sygus}, such as linear
integer arithmetic, or bitvectors, are also loop-free.

Formally speaking, in this paper we use the term
loop-free languages to refer to languages for which the semantics 
are decidable and not Turing-complete (e.g., primitive recursive languages).
For these languages, program synthesis becomes $\sic{2}$-complete.

\begin{theorem}
  \label{cor:synth-loop-free}
  Let $\sysetnoloop$ denote the set of synthesis problems where the 
  grammar $G$ is loop-free.
  Then $\sysetnoloop$ is $\sic{2}$-complete.
\end{theorem}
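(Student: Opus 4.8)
The plan is to establish the two directions separately, mirroring the $\sic{3}$ analysis of \S\ref{Se:completeness} but with one fewer quantifier alternation.

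For the upper bound ($\sysetnoloop \in \sic{2}$), the key observation is that for a loop-free grammar the middle existential quantifier $\exists \vtree$ in \eqref{synth-output-replaced} can be eliminated. Recall that in the loop-free construction of \S\ref{SubSe:semantics}, the value tree has exactly the shape of the (complete binary) syntax tree, and each node's value is a deterministic function of its children's values; the $2$-nested sequences of unbounded length introduced in \S\ref{SubSe:loops} were needed \emph{only} for loops. Consequently, for a fixed $\ptree$ and input $\sigma$ there is a \emph{unique} valid value tree, computable bottom-up by iterating over the at most $2^h - 1$ nodes, where the height $h$ is recoverable from $\ptree$. Because the semantics of a loop-free language are by definition total and primitive recursive, this bottom-up computation defines a primitive recursive function $V(\ptree, \sigma)$ returning the value tree. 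I would therefore replace $\exists \vtree.\ \sem{\cdot}(\ptree)(\sigma)(\vtree) \wedge \phi(\sigma, \ptree, \vtree)$ by $\phi(\sigma, \ptree, V(\ptree, \sigma))$, yielding $\sy \equiv \exists \ptree. \forall \sigma.\ \phi(\sigma, \ptree, V(\ptree, \sigma))$. Since $\phi$ and $V$ are both primitive recursive, the matrix is $\sic{0}$, the formula has the shape $\exists\forall$, and hence $\sysetnoloop \in \sic{2}$.

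For the lower bound ($\sysetnoloop$ is $\sic{2}$-hard), I would reduce an arbitrary $\sic{2}$ set $A$ to $\sysetnoloop$. Writing $A$ in normal form as $A = \set{n \mid \exists x. \forall y.\ R(n, x, y)}$ for a decidable (primitive recursive) $R$, the goal is to build, computably from $n$, a loop-free synthesis problem $\sy_n$ that is realizable iff $n \in A$. The idea is to make the synthesized program encode the existential witness $x$: take $G$ to be a loop-free grammar whose programs compute arbitrary constants (e.g. the expression fragment $E ::= 0 \mid 1 \mid E + E$, which realizes every natural number and whose output is independent of the input), and let the specification be $\phi(\sigma, f, \sem{f}(\sigma)) \defeq R(n, \sem{f}(\sigma), \sigma)$, reading the input $\sigma$ as the universally quantified $y$. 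Then $\sem{f}$ is some constant $c$, so $\sy_n$ is realizable iff $\exists c. \forall y.\ R(n, c, y)$, i.e. iff $n \in A$. Since the map $n \mapsto \sy_n$ merely substitutes $n$ into $\phi$, the reduction is computable. Combined with the upper bound this gives $\sic{2}$-completeness; concretely one may instantiate $A = \fin$, the set of programs with finite domain, which is known to be $\sic{2}$-complete.

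The main obstacle is the upper-bound step: justifying rigorously that ``loop-free'' (decidable, non-Turing-complete semantics) lets us discharge the middle existential into a primitive recursive function. The delicate point is that the value-tree construction of \S\ref{SubSe:loops}, which used nested sequences of a priori unknown length to witness loop iterations, is exactly the feature that forces the extra $\exists \vtree$; I must argue that in the loop-free fragment this witnessing data collapses to the fixed-shape value tree of \S\ref{SubSe:semantics}, so that no genuine guessing remains and the tree is bounded by (and primitive-recursively determined from) the syntax. Once that collapse is established, the quantifier count drops cleanly from three to two and the classification follows.
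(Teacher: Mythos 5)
Your proposal is correct and follows essentially the same route as the paper: the upper bound discharges the existential over the value tree $\vtree$ because the loop-free semantics are (primitive-)recursively computable bottom-up, and the lower bound uses the constant grammar $C ::= 0 \mid 1 \mid C + C$ so that the synthesized program serves as the outer existential witness of a $\sic{2}$ predicate. The only difference is one of packaging: the paper instantiates the reduction concretely with $\fin$, using a ``$g$ halts within $\sem{\cdot}(\ftree)(x)$ steps'' predicate as the matrix, whereas you reduce from an arbitrary $\sic{2}$ set in normal form (noting $\fin$ as an instance), which is the same mechanism stated slightly more generally and avoids having to reason about step bounds.
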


The fact that $\sysetnoloop$ is \emph{in} $\sic{2}$ follows from the fact that 
if $G$ is loop-free, one may remove the existential quantification over the 
value tree $\vtree$ in \eqref{synth-output-replaced}.
Because the semantics of $G$ are decidable, it is possible to encode 
$\sem{\cdot}$ directly as a quantifier-free 
primitive recursive formula that directly substitutes all occurrences 
$\vtree$ in \eqref{synth-output-replaced}.
Conversely, one can also say that the reason why the value tree must 
be existentially quantified in \eqref{synth-output-replaced} is \emph{because} 
synthesis problems, in general, may contain programs with loops.

That $\sysetnoloop$ is $\sic{2}$-hard follows from the fact that one
can reduce the decision problem for $\fin$, which is the set of functions which 
halt only on a finite set of 
inputs and well-known to be $\sic{2}$-complete, 
into a decision problem for $\sysetnoloop$.

\begin{proof}
  Let $C$ be a grammar consisting of the set of all natural numbers, 
  $C ::= 0 \mid 1 \mid C + C$.
  Consider a synthesis problem $\sy_C$ defined over $C$ as following, 
  where we have removed the value tree and replaced with direct occurrences of $\sem{\cdot}(\ftree)(x)$ 
  as discussed:
  \begin{equation}
    \label{Eq:fin-sem-reduce}
    \syc \defeq \exists \ftree. \forall x. 
      x < \sem{\cdot}(\ftree)(x) \vee \lnot \terminate(\gtree, x, \sem{\cdot}(\ftree)(x))
  \end{equation}
  In \eqref{fin-sem-reduce}, we use $\terminate(\gtree, x, \sem{\cdot}(\ftree)(x))$ to represent a predicate 
  that returns $\Etrue$ iff the term encoded by $\gtree$ terminates on the input $x$ 
  within $\ftree(x)$ steps.
  Such a predicate is clearly encodable as a $\sic{0} = \pic{0}$ formula, as 
  $\sem{\cdot}(\ftree)(x)$ limits the maximum size of the value tree that must be considered, allowing us to 
  introduce the value tree using a bounded quantifier
  (even if $g$ is not recursive).

  Then $\syc$ is a synthesis problem that is realizable iff $\gtree$ encodes a term $g$
  such that $g \in \fin$.
  If $g \in \fin$, then as $g$ terminates only on a finite set of inputs, we may take the 
  maximum number of steps $M_g$ required for $g$ to terminate on these inputs; $M_g$ serves 
  witness as a term in $L(C)$ that satisfies $\syconst$.
  On the other hand, if $\syconst$ is realizable, there must exist a $M_g$ such that 
  $g$ does not terminate on all inputs $x > M_g$; thus $g \in \fin$.

  Thus the decision problem for $\fin$ can be reduced into an instance of a loop-free
  synthesis problem, and it follows that loop-free synthesis is $\sic{2}$-complete.
\end{proof}

We observe that one cannot reduce $\cof$ into a loop-free synthesis problem 
because $g \in \cof$ can represent any general function in $\gtgt$, making it impossible 
to encode $\fpair(f_q, \Eif{y > f_q}{g(y)})$ using only loop-free languages.
Note that it does not matter even if $g$ is recursive, 
as there exist functions that are recursive but outside of any 
decidable language.


\myparvs{Synthesis Modulo Partial Correctness}
A similar variant to synthesis problems over loop-free languages are 
synthesis problems that only ask for \emph{partial correctness}, 
i.e., that the synthesized function need only satisfy the specification 
if the function terminates.
It is true that synthesizers often require total correctness instead 
of partial correctness, but we argue that 
at least some of this is due to the fact that, 
as previously discussed, many synthesizers 
actually focus on loop-free languages (where termination is guaranteed) 
to begin with.
Synthesizers that do allow loops
often reason about loops via unrolling them up to a finite bound
(e.g, Sketch~\cite{sketch} and Rosette~\cite{rosette}), which 
may be understood as enforcing partial correctness 
up to the loop unrolling bound.

Like synthesis over loop-free languages, synthesis modulo 
partial correctness is a $\sic{2}$-complete problem.

\begin{corollary}
  \label{cor:synth-partial}
  Let $\sysetpartial$ denote the set of synthesis problems that are realizable, 
  under the condition that the specification $\phi$ 
  need only be satisfied only when the target function $f$ terminates.
  Then $\sysetpartial$ is in $\sic{2}$.
\end{corollary}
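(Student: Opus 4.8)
The plan is to reuse the first-order formula \eqref{synth-output-replaced} for total-correctness synthesis and to observe that relaxing to partial correctness lets us swap the innermost existential quantifier for a universal one, collapsing one level of the arithmetical hierarchy. Recall that \eqref{synth-output-replaced} reads $\exists \ptree. \forall \sigma. \exists \vtree.\ \sem{\cdot}(\ptree)(\sigma)(\vtree) \wedge \phi(\sigma, \ptree, \vtree)$, and that the $\exists \vtree$ simultaneously asserts that $f$ \emph{terminates} on $\sigma$ (a value tree witnessing the execution exists) and that the resulting output \emph{satisfies} $\phi$. Partial correctness drops the termination obligation: $\phi$ need only hold \emph{if} $f$ terminates.

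First I would rewrite the correctness condition on a single input $\sigma$ as the implication ``if a value tree exists, then it satisfies $\phi$'', i.e. $\forall \vtree.\ \sem{\cdot}(\ptree)(\sigma)(\vtree) \rightarrow \phi(\sigma, \ptree, \vtree)$. This faithfully encodes partial correctness: when $f$ diverges on $\sigma$ no value tree satisfies $\sem{\cdot}$, so the implication holds vacuously and imposes no obligation; when $f$ terminates, the value tree (uniquely determined, since $\ltgt$ is deterministic) must satisfy $\phi$. Substituting this into \eqref{synth-output-replaced} yields
\begin{equation*}
  \sysetpartial \equiv \exists \ptree.\ \forall \sigma.\ \forall \vtree.\ \bigl(\sem{\cdot}(\ptree)(\sigma)(\vtree) \rightarrow \phi(\sigma, \ptree, \vtree)\bigr).
\end{equation*}

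Next I would read off the classification. Both $\sem{\cdot}$, which uses only bounded quantifiers as established in \S\ref{SubSe:semantics} and \S\ref{SubSe:loops}, and $\phi$, primitive recursive by assumption, are $\sic{0} = \pic{0}$, so the matrix of the formula is $\sic{0}$. The two adjacent universal quantifiers $\forall \sigma$ and $\forall \vtree$ merge into a single universal block, leaving exactly one quantifier alternation $\exists\forall$ over a $\sic{0}$ body. By Definition~\ref{def:ah} this places $\sysetpartial$ in $\sic{2}$.

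The content of the argument is entirely in the first step: justifying that $\exists \vtree (\sem{\cdot} \wedge \phi)$ may be replaced by $\forall \vtree (\sem{\cdot} \rightarrow \phi)$ without changing the realizability semantics under partial correctness. The main thing to get right is the vacuous-truth case: I must argue that the absence of a value tree genuinely coincides with nontermination of $f$ on $\sigma$, which follows from the correctness of the $\sem{\cdot}$ encoding (a value tree satisfying $\sem{\cdot}(\ptree)(\sigma)(\vtree)$ exists iff $\sem{f}(\sigma)$ is defined). Determinism of $\ltgt$ further guarantees that this tree, when it exists, is unique, so the universal form does not accidentally strengthen the specification by quantifying over spurious alternative executions.
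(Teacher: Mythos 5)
Your proposal is correct and follows essentially the same route as the paper: the paper likewise rewrites the partial-correctness condition as an implication whose antecedent is the existence of a value tree, turns that existential into a universal via the implication-to-disjunction rewriting, and merges it with $\forall \sigma$ to read off membership in $\sic{2}$. Your extra care about the vacuous-truth case and the uniqueness of the value tree is a more explicit justification of the same equivalence, not a different argument.
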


That $\sysetpartial$ is in $\sic{2}$ may be easily shown by 
updating \eqref{synth-output-replaced} to require partial correctness instead.
\[
   \sy_{\textnormal{part}} \defeq \exists \ptree. \forall \sigma. 
   (\exists \vtree. \sem{\cdot}(\ptree)(\sigma)(\vtree) \implies \phi(\sigma, \ptree, \vtree))
\]
Rewriting the implication into a disjunction 
converts the existential over the value tree $\vtree$ into a universal, 
therefore classifying program synthesis modulo partial correctness as in $\sic{2}$.

On the other hand, that $\sysetpartial$ is $\sic{2}$-hard can be proved 
in the same way as for $\sysetnoloop$, by reducing the decision problem for
$\fin$ (as all programs in $L(C)$ are guaranteed to terminate anyways).
We observe that, like the case for loop-free languages, one cannot reduce 
$\cof$ into synthesis modulo partial correctness, this time 
because the upper limit $\decode_1(\ftree)$ must exist for all $y$ 
(but will not if $f$ is allowed not to terminate on some inputs).

We expect partial correctness to be considered increasingly 
often as synthesizers expand their support for loops:
consider the fact that partial correctness as proved by 
Hoare logic plays a major role in program verification, 
which program synthesis relies on.

\myparvs{Synthesis with Complex Specifications}
Having discussed some easier variants of synthesis problems, 
we now consider some scenarios that are more complex 
than what we have discussed in \S\ref{Se:completeness}.

A good example is when the set of specifications is allowed to become 
stronger than decidable formulae: this may happen, for example, 
when the specification is given
as a reference implementation  (i.e., a $\sic{1}$-formula) 
as opposed to a primitive recursive formula.

From \eqref{synth-output-replaced}, it becomes clear that 
with $\sic{1}$ specifications, 
when considering 
only total correctness, synthesis remains $\sic{3}$-complete.
Perhaps more interesting is the case when considering 
partial correctness: when considering partial correctness with respect 
to a reference implementation, often the desire is that the 
functions should diverge on the same set of inputs.
In this case, synthesis remains  $\sic{3}$-complete 
instead of becoming simpler, 
the reason for this being that determining 
whether two functions terminate on the same set of inputs 
is itself a specification stronger than $\sic{1}$.


For even more complex specifications, from \eqref{synth-output-replaced} 
it immediately follows that 
for $\phi \in \sic{n + 1}$ or $\phi \in \pic{n}$, synthesis 
lies in $\sic{n + 3}$.

\myparvs{Synthesis with Quantitative Objectives}
An interesting variant of synthesis problems are those with 
\emph{quantitative objectives}~\cite{qsygus} over the 
syntax of the produced function: for example, to impose a 
maximum size on the solution.
The ability to express such specifications is already included 
in Definition~\ref{def:synth-def}, which allows $\phi$ to operate 
over the syntax tree $\ftree$; 
thus the $\sic{3}$-completeness of synthesis is preserved 
when considering synthesis with quantitative objectives.

\myparvs{Hyperproperties}
As a final variant of synthesis problems, we consider 
those where the specification is a \emph{hyperproperty}~\cite{hyperproperties}: 
i.e., properties where that must hold over \emph{multiple} runs 
of a program.
Examples of hyperproperties include monotonicity, or transitivity.

Hyperproperties require a relaxation to Definition~\ref{def:synth-def}, as 
properties such as monotonicity cannot be expressed by 
calling a target function $f$ only once; we must thus relax Definition~\ref{def:synth-def} 
to allow calling $f$ multiple times.
Assuming the definition allows one to call $f$ a finite number of times, one may 
see that synthesis remains $\sic{3}$-complete via \eqref{synth-output-replaced}.

\section{Discussion} 
\label{Se:discussion}

In this section, we discuss the contributions of the material in this paper, 
especially the first-order construction detailed in \S\ref{Se:fo}.


It is true that, as noted in \S\ref{Se:intro} and 
\S\ref{Se:fo}, there exist other ways to prove 
that there exist first-order representations of program synthesis, some of 
which are perhaps simpler than the full construction 
provided in \S\ref{Se:fo}.
It is also true that, from the viewpoint of computability, 
the fact that synthesis has a first-order representation is not a strictly new 
discovery: for example, proofs of the 
undecidability of the Halting problem rely on \godel numberings 
of Turing machines, and the construction of $\sem{\cdot}$
may be understood as encoding a universal Turing machine 
as a formula, both of which are topics that have been extensively studied.

However, we argue that the construction in \S\ref{Se:fo}, which intentionally 
constructs separate representations for each element in a synthesis problem, 
is highly beneficial as a \emph{tool} for studying the theoretical properties 
of program synthesis.
In particular, such a concrete yet intuitive construction helps 
identify and clarify tiny subtleties that may arise as sources of confusion 
when studying the computational hardness of synthesis programs.

As an example, we will once again consider two variants of synthesis problems 
that bring about changes to the hardness of synthesis despite 
being small changes.
If one fixes the grammar to $\gtgt$ as in \S\ref{Se:completeness}, but instead 
\emph{disallows} the specification to refer to $f$ itself 
(i.e., as in \eqref{synth-intro}), 
then program synthesis becomes $\pic{2}$-complete instead.

This fact follows from a reduction of $\tot$, the set of all formulae $\phi(x, y)$ 
such that $\forall x. \exists y. \phi(x, y)$ is true: 
it is clear that for $\phi \not \in \tot$, any synthesis problem of the form 
$\exists f. \forall x. \exists y. f(x) = y \wedge \phi(x, y)$ is unrealizable by
definition.
Conversely, if $\phi \in \tot$, then the aforementioned synthesis problem 
is realizable by an $f$ that enumerates the evaluation of $\phi(x, 0), \phi(x, 1), \cdots$ 
until it finds $y$ for which $\phi(x, y)$ holds: such an algorithm is guaranteed 
to halt by totality of $\phi$.
Because $\exists f. \forall x. \exists y. f(x) = y \wedge \phi(f(x), x)$ also 
constitutes the \emph{entire} set of synthesis queries (under the assumption that 
$\phi$ cannot reference $f$), the two sets are identical, 
making synthesis in this scenario $\pic{2}$-complete 
(as $\tot$ is $\pic{2}$-complete).

One might guess then, that the power to reference $f$ plays a vital role in 
the computational hardness of synthesis---but actually, this is not necessarily 
the case!
To see this, consider the set of synthesis problems that are defined 
over the grammar $C$ from Example~\ref{ex:suboptimal} (e.g., the set of constant values).
Program synthesis over $C$ is actually $\sic{3}$-complete again, \emph{even if} 
$\phi$ is not allowed to reference $f$---given a slight relaxation 
that the specification $\phi$ is now allowed to refer to $\sic{1}$-formulae.
The proof of this fact follows from the fact that one can once again 
reduce $\cof$ into synthesis problems with the aforementioned restrictions: 
the set of constants allows one to find the upper limit for inputs that 
may nonterminate.

These results suggest that there is a lot of subtlety hidden in 
precisely determining the computational hardness of synthesis problems, 
which is where a constructive encoding as in \S\ref{Se:fo} is beneficial.
As illustrated in \S\ref{Se:variants}, and in this section, we argue the 
constructive first-order encoding is what allows us to cleanly and 
efficiently consider the hardness of variants of synthesis problems 
we have discussed in this paper despite these subtleties.

In addition to being useful as a tool for studying synthesis from 
a theoretical perspective, we argue that the methodology developed in 
\S\ref{Se:fo} can also be useful for other tasks and proofs.
For example, the value tree can be used as a proof technique for 
the proof of completeness in unrealizability logic~\cite{ul}, where 
a key part of the proof
is to provide a precise invariant for a set of loops 
(similar to how a key part of the completeness proof in Hoare logic 
depends on being able to provide a precise invariant for single loops). 
The value tree, which may be understood as a mechanism for capturing the semantics 
of a set of programs via guess-and-check, 
provides a methodology for one to construct such an invariant.



\section{Conclusion}
\label{Se:conclusion}
In this paper, we have constructed a first-order representation 
of program synthesis in order to prove that 
program synthesis in general is $\sic{3}$-complete.
In addition to this main result, we have also studied 
the computability of variants of program synthesis, 
such as synthesis over finite examples, 
or generalization.
We hope that this paper will be able to serve as a 
reference for future work in program synthesis, including 
theoretical results, solving procedures, or proving unrealizability.


\bibliography{reference}



\end{document}